\documentclass[a4paper,11pt]{article}
\usepackage[margin=1in]{geometry}
\usepackage[T1]{fontenc}
\usepackage[utf8]{inputenc}
\usepackage[UKenglish]{babel}
\usepackage{authblk}
\usepackage{microtype}
\usepackage{amsmath,amssymb,amsthm}
\usepackage{enumerate}
\usepackage{graphicx}
\usepackage[hidelinks]{hyperref}
\usepackage[capitalise,noabbrev,nameinlink]{cleveref}
\usepackage{booktabs}
\usepackage{tikz}
\usepackage{xcolor}
\definecolor{swapearly}{RGB}{0,114,178}
\definecolor{swaplate}{RGB}{213,94,0}
\definecolor{pathearly}{RGB}{0,158,115}
\definecolor{pathlate}{RGB}{204,121,167}
\usetikzlibrary{calc,positioning,arrows.meta,decorations.pathreplacing}

\newcommand{\TG}{\mathcal{G}}
\newcommand{\texp}{\operatorname{texp}}
\newcommand{\und}{U}

\newcommand{\floor}[1]{\lfloor #1\rfloor}
\newcommand{\hub}{h}

\theoremstyle{plain}
\newtheorem{theorem}{Theorem}[section]
\newtheorem{lemma}[theorem]{Lemma}
\newtheorem{corollary}[theorem]{Corollary}
\newtheorem{proposition}[theorem]{Proposition}

\theoremstyle{definition}
\newtheorem{definition}[theorem]{Definition}

\title{Hiding a Vertex from the Temporal Explorer: A Lower Bound for Degree-Bounded Temporal Graphs}

\author{Daniele Carnevale}

\affil{University of Geneva\\ \texttt{daniele.carnevale@unige.ch}}

\date{}

\begin{document}
\maketitle

\begin{abstract}
A temporal graph is a sequence of graphs on a common set of $n$ vertices, its snapshots, one for each time step.  An agent, knowing the entire sequence in advance, may at each time step wait or move along an edge of the current snapshot, and the temporal graph is explored once the agent has visited every vertex.  We consider always-connected temporal graphs, in which every snapshot is connected, and ask how long exploration can be forced to take when the underlying graph, the union of all snapshots, has maximum degree at most~$\Delta$. Two lower bounds were known in this setting, $\Omega(\Delta n)$ and $\Omega(n\log n)$, realised by different constructions.  We establish a stronger lower bound, answering a question of  Bastide, Groenland, Michel and Rambaud. Specifically, for every $n$ and $\Delta$ with $\Delta_0\le\Delta\le n-1$ we construct an always-connected temporal graph on $n$ vertices with underlying maximum degree at most~$\Delta$ that cannot be explored in fewer than $\gamma\,\Delta n\,(1+\log(n/\Delta))$ time steps from any start vertex, where $\gamma>0$ and $\Delta_0$ are absolute constants. The construction is deterministic, and every snapshot is a spanning tree with exactly one vertex of degree greater than three. Time is divided into phases. In each phase, a rotating-cycle gadget prevents the agent from reaching more than half of the trees attached to it. Between phases, we reassign target vertices among these trees using walks on a constant-degree expander, and a Gray code order of the targets keeps the underlying degree bounded. The reassignment guarantees that, for any walk of the agent, some target remains unvisited throughout all phases.
\end{abstract}

\medskip\noindent\textbf{Keywords.} temporal graphs, temporal exploration, lower bounds, bounded degree, expander graphs.

\section{Introduction}\label{sec:intro}
Temporal graphs model networks whose topology changes over time.  A temporal graph is a
sequence $G_1,G_2,\ldots$ of graphs on a common set $V$ of $n$ vertices; the graph $G_t$, the
\emph{snapshot} at time~$t$, describes the connections available at time step~$t$.  At
each time step a temporal walk either waits at its current vertex or moves along one edge
of the current snapshot, so it can use an edge only at a time when that edge is present. The temporal
exploration problem asks for a temporal walk that starts at a given vertex and visits every
vertex of the graph as early as possible; we study the offline version, in which the whole
sequence of graphs is known in advance.

The problem was introduced by Michail and Spirakis~\cite{MS16}, whose temporal graphs may be
directed and have a finite lifetime; they proved that deciding whether a temporal graph can be
explored within its lifetime is NP-complete, and that, even when every snapshot is strongly
connected, the exploration time cannot be approximated within a factor $2-\varepsilon$ for any $\varepsilon >0$ unless
$\mathrm{P}=\mathrm{NP}$.
Much of the later work, starting with Erlebach, Hoffmann and Kammer~\cite{EHK21},
concerns undirected \emph{always-connected} temporal graphs, in which every $G_t$ is
connected; for these they showed that approximating the exploration time within
$O(n^{1-\varepsilon})$ is NP-hard for every $\varepsilon>0$.  Under this restriction an
exploration always exists: since every snapshot is connected, the set of vertices the agent can have
reached is either already all of $V$ or has an edge of the current snapshot leaving it, and
therefore grows by at least one vertex per time step.  The agent can thus reach any
prescribed vertex within $n-1$ steps, and visiting the $n$ vertices one after the other
takes $O(n^2)$ steps~\cite{MS16}. A \emph{rotating star} shows that this bound is tight in general~\cite{EHK21}. 
Half of the vertices take turns as the centre in cyclic order, and each snapshot is the star centred at the current one. 
After visiting a vertex outside this rotating set, the agent can leave it only by moving to the current centre. 
By the time it arrives, however, the centre has advanced to the next vertex. Chasing the centre repeats the same problem: 
the agent always arrives one step too late. It must therefore wait almost a full rotation for the vertex it occupies to 
become the centre before it can visit another vertex outside the rotating set. Visits to distinct vertices in this 
second set thus require $\Omega(n)$ steps each, giving an exploration time of $\Omega(n^2)$.

Since the quadratic bound is attained by very dense temporal graphs, later work asked how
much faster exploration becomes under structural restrictions on the underlying graph,
such as bounded treewidth or planarity~\cite{EHK21,AGMZ22,BGMR25,AS26}.
The line we follow bounds degrees, and here two parameters are natural.  One may bound the
maximum degree of every snapshot $G_t$, or that of the \emph{underlying graph}, the union
of all snapshots; the second is the stronger condition, since it bounds how many different
neighbours a vertex has over time.  For a bound $\Delta_s$ on the degree of every
snapshot, Erlebach and Spooner~\cite{ES18} proved an upper bound of
$O(\Delta_s\log\Delta_s\cdot n^2/\log n)$, which Erlebach, Kammer, Luo, Sajenko and
Spooner~\cite{EKLSS19} improved to $O(n^{7/4})$ for constant~$\Delta_s$. 
Bastide, Groenland, Michel and Rambaud~\cite{BGMR25} subsequently obtained a
stronger upper bound in terms of the average temporal maximum degree.

Two lower bounds were known for bounded underlying degree.  The first comes from the
rotating star itself: chaining $n/\Delta$ copies of it, each on $\Theta(\Delta)$ vertices
and sharing a leaf with the next, brings the underlying degree down to $\Delta$, while
inside a copy each vertex of the other half still costs $\Omega(\Delta)$ steps, so the
whole chain takes $\Omega(\Delta n)$ steps~\cite{EHK21}.  The second does not rely on high
degrees at all: on a planar underlying graph of maximum degree four whose snapshots are
paths~\cite{EHK21}, and even of maximum degree three~\cite{AGMZ22}, exploration can be
forced to take $\Omega(n\log n)$ steps.  Together they give $\Omega(n\max\{\Delta,\log n\})$,
but the two delays are realised by different families of temporal graphs, and Bastide
et al.~\cite{BGMR25} remark that the two constructions seem difficult to combine.
Indeed, the degree delay of the rotating star needs a vertex of large degree that moves, and the
logarithmic delay, as we obtain it, needs the vertices still to be visited to be rearranged
between phases.  A vertex whose place changes can meet new neighbours each time, and when
$\Delta$ is small compared with the number of phases this is more than the underlying
degree allows.  We show that the rearrangement can be organised so that the underlying
degree stays within~$\Delta$, and that a single temporal graph then forces both delays at
once.

\begin{theorem}\label{thm:main}
There are absolute constants $\gamma>0$ and $\Delta_0$ with the following property.  For all
integers $n$ and $\Delta$ with $\Delta_0\le \Delta\le n-1$ there is an always-connected temporal
graph $\TG$ on $n$ vertices whose underlying graph has maximum degree at most $\Delta$
and such that every exploration of $\TG$, from every start vertex, uses at least
$\gamma\,\Delta n\,(1+\log(n/\Delta))$ time steps. Every snapshot of $\TG$ is a spanning
tree with a single vertex of degree larger than three.  The temporal graph is constant from
some time step $O(\Delta n\,(1+\log(n/\Delta)))$ on, and it can be constructed
deterministically in time polynomial in~$n$.
\end{theorem}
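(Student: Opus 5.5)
The plan is to build $\TG$ in $L=\Theta(1+\log(n/\Delta))$ phases of length $\Theta(\Delta n)$ each, and to show that an agent finishing within $\gamma\Delta n L$ steps misses some vertex. Take a hub cycle $C$ on $k=\Theta(\Delta)$ vertices and attach to it $m=\Theta(n/\Delta)$ \emph{trees} $T_1,\dots,T_m$, each with $\Theta(\Delta)$ vertices and maximum degree at most three. Each tree has a root that may be joined to $C$; the other vertices lie in paths whose lengths are $\Theta(\Delta)$.

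In each snapshot one vertex of $C$ is the current centre, which rotates around $C$. The centre is adjacent to the roots of a bounded number of trees, chosen so that over a full rotation each cycle vertex serves a fixed set of $O(1)$ trees. The rest of the snapshot is the cycle minus one edge, plus the internal paths of the trees, so every snapshot is a spanning tree with one high-degree vertex. The underlying degree is then $O(\Delta)$ at the centres and at most three elsewhere, and rescaling constants gives $\Delta$.

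The first lemma is a \emph{phase lemma}, obtained by mimicking the rotating-star argument. Each tree has a designated \emph{target} leaf at distance $\Theta(\Delta)$ from its root. Reaching a target and coming back costs $\Theta(\Delta)$ steps, after which the agent must wait up to a rotation of length $\Theta(\Delta)$ before it can enter another tree. I would choose the rotation period and the phase length so that within a phase of length $c\Delta n$ the agent visits the targets of at most half of the $m$ trees.

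Between phases we reassign which vertex of each tree is the target. The agent can make its choice adaptively, so the reassignment must be fixed in advance yet defeat every walk. The intended invariant is a nested family $S_1\supseteq S_2\supseteq\cdots$ of ``still-hidden'' candidate locations with $|S_{j+1}|\ge |S_j|/2$ in the appropriate sense. To get this, index the potential target positions by the vertices of a constant-degree expander $H$ on $m$ vertices. In phase $j$, the target of tree $i$ is determined by a walk in $H$: the hidden item moves along edge label $j$.

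An expansion or mixing argument then shows the following. If the agent covers at most half of the trees in each phase, then after $L\approx\log_2 m$ phases some expander walk avoided every visited set. I would argue this via the count of surviving walk-paths: the set of unvisited continuations shrinks by at most a constant factor per phase, and expansion prevents it from concentrating. This gives the bound $\Omega(\Delta n\log(n/\Delta))$. For $n/\Delta=O(1)$, the bound $\Omega(\Delta n)$ follows from a single phase.

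The degree constraint is the main obstacle. Moving targets naively would connect a tree to $\Theta(L)$ different hub vertices, or give tree vertices many different neighbours. The remedy is to keep the trees and their attachment positions fixed and instead move the \emph{role} of target. Each tree contains $\Theta(\Delta)$ candidate leaves, and the target is selected by which leaf is made reachable, i.e.\ which internal edges are present. Only a bounded number of edge changes between consecutive phases should be needed, and this is where the Gray code order comes in. We list the candidate target configurations so that consecutive ones differ by a single toggle, so each vertex ever sees only $O(1)$ distinct neighbours. I would try first to encode the expander label as a Gray-code word of length $O(\log m)\le O(\Delta)$ along a path in the tree; this uses $\Delta\ge\Delta_0$ and $\log(n/\Delta)\lesssim\Delta$ as needed, and the large-$L$ case must be checked separately.

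Finally, I would verify the remaining claims. After the last phase the graph is frozen, so it is constant from time $O(\Delta nL)$ on. The explicit expander (e.g.\ Margulis or a zig-zag construction) and the Gray code give a deterministic polynomial-time construction. Start vertices need no special treatment: the agent can begin anywhere, since the phase lemma holds uniformly.
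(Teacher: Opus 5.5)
\textbf{Gap 1: the rotating gadget is scaled wrongly, so the factor $\Delta$ is lost.} With a cycle of only $\Theta(\Delta)$ vertices, the penalty for switching trees is $O(\Delta)$ steps. With $m=\Theta(n/\Delta)$ trees of $\Theta(\Delta)$ vertices each, an agent can sweep every tree completely in $O(m\Delta)=O(n)$ steps. So your phase lemma is false for phases of length $\Theta(\Delta n)$. Your single-phase argument for $n/\Delta=O(1)$ likewise gives only $O(n)$, not $\Omega(\Delta n)$.

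The delay forced per phase is the number of branches the agent must be kept out of times the cost of each switch. The first factor is capped by the hub degree $\Delta$ and the second by the cycle length, at most $n$. Both must be at their maximum:
\begin{itemize}
\item $\Theta(\Delta)$ branches hanging \emph{simultaneously} off a single hub of degree $\Theta(\Delta)$;
\item a rotating cycle of length $k=\Theta(n)$, so that an agent leaving a branch lands on the old hub at the far end of the cycle path and needs about $k/2$ steps to reach the hub again.
\end{itemize}
Your snapshots also do not fit together. If the centre meets only $O(1)$ trees, the other trees are disconnected, or are attached elsewhere and can be entered without delay. Once all branches hang off a moving hub, you also need a device like the paper's gates, one gate per residue class of cycle positions. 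Otherwise a branch vertex acquires $\Theta(n)$ cycle neighbours over one rotation.

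\textbf{Gap 2: the evasion mechanism hides nothing.} Exploration must visit every vertex. If each tree has a fixed vertex set, a tree swept once is finished, whichever vertex currently plays the ``role'' of target. Selecting the target by ``which leaf is made reachable'' is meaningless when every snapshot is a spanning tree. Hiding requires actual vertices to change branch between phases.

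One moving target per tree is also far too few. An agent that visits, each phase, the half of the branches holding the most unvisited targets catches at least half of them per phase. Surviving $\tau$ phases therefore needs more than $b\,2^{\tau-2}$ targets, so ``counting surviving walks'' only works if every walk is realised by its own vertex. The paper's construction does exactly this:
\begin{itemize}
\item It uses $bc^{\tau-1}$ targets, one for each starting branch and each word of $\tau-1$ expander steps, arranged in $b$ rows and $c^{\tau-1}$ columns.
\item Each column is relabelled by a single permutation at each phase boundary, so every branch holds exactly one target per column.
\item Evasion follows from the property that every quarter of the branches, together with its neighbours, covers three quarters.
\item The constraint $bc^{\tau-1}\le n/3$ is precisely what yields $\tau=\Theta(1+\log(n/\Delta))$.
\end{itemize}
The Gray code is not used to encode labels along a path inside a tree. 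It orders the columns, so that between two consecutive columns the per-phase matchings take at most two values. The union of all target paths then has maximum degree at most four.
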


Against the previous lower bound, for $\Delta\le n^{1-\varepsilon}$ the maximum in
$\Omega(n\max\{\Delta,\log n\})$ becomes a product.  In general the gain is a factor
$\Theta(\min\{\Delta,1+\log(n/\Delta)\})$.  It is $\Theta(\Delta)$ for $\Delta\le\log n$, so
there is none for constant $\Delta$, and $\Theta(\log n)$ for
$\log n\le\Delta\le n^{1-\varepsilon}$.  Closer to $n$ it degrades, as it must, since $n^2$
steps always suffice: it is only $\Theta(\log\log n)$ at $\Delta=n/\log n$, and constant for
$\Delta=\Theta(n)$.

Since $\Delta_s\le\Delta$, \cref{thm:main} is a lower bound for the bounded snapshot degree
setting as well.  On the upper bound side, Bastide et al.~\cite{BGMR25} showed that every
always-connected temporal graph can be explored in
$O(n^{3/2}\sqrt{\overline{\Delta}\log n})$ steps, where $\overline{\Delta}$ is its
\emph{average temporal maximum degree}, the average over the vertices $v$ of the largest
degree of $v$ over all time steps.  As $\overline{\Delta}\le\Delta_s$ (\cref{sec:model}),
their bound is $O(n^{3/2}\sqrt{\Delta_s\log n})$ for snapshot degree at most $\Delta_s$ and
$O(n^{3/2}\sqrt{\Delta\log n})$ for underlying degree at most $\Delta$. The best known upper bound for our class is therefore
$O(\min \{n^2,n^{3/2}\sqrt{\Delta\log n}\})$.
This leaves a polynomial gap whenever $\Delta$ is polynomially smaller than $n$.
For example, for any fixed $0<\varepsilon<1$, setting $\Delta=n^{1-\varepsilon}$ gives a lower bound of
$\Omega(n^{2-\varepsilon}\log n)$ and an upper bound of
$O(n^{2-\varepsilon/2}\sqrt{\log n})$.

The difficulty of combining the two known lower bounds is what led Bastide et al.\ to
Problem~3.1 of~\cite{BGMR25}, which asks whether there is a function $f=\omega(1)$ such
that for all integers $n,D\ge1$ there is a temporal graph on $n$ vertices with average
temporal maximum degree at most $D$ whose shortest exploration has length at least
$f(D)\,n\log n$. Since the average temporal maximum degree of an always-connected temporal
graph on $n$ vertices lies between $2-2/n$ and $n-1$, we consider $D$ in this range. They
note that one way to obtain such a construction would be through snapshots of maximum
degree at most~$D$.
\Cref{thm:main} answers the question positively in this way, and even with bounded
underlying degree, the most restrictive of the three degree settings.  Moreover, in every
snapshot all vertices but one have degree at most three, and the factor~$\Delta$ in the bound
comes from the remaining vertex, whose degree is $\Theta(\Delta)$ (\cref{sec:variants}).

\begin{corollary}\label{cor:problem}
Let $\Delta_0\le \Delta\le n-1$.  The temporal graph constructed for \cref{thm:main} has
average temporal maximum degree $\overline{\Delta}=\Theta(\Delta)$, and its exploration time
from every start vertex is at least $(\gamma/2)\,(\Delta/\log \Delta)\,n\log n$, which
improves to $(\gamma/2)\,\Delta\,n\log n$ when $\Delta\le\sqrt n$.
\end{corollary}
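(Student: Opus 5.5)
The corollary follows from \cref{thm:main} together with two facts about the construction: its snapshots have maximum degree $\Theta(\Delta)$, and the rest of the statement reduces to an elementary inequality relating $1+\log(n/\Delta)$ to $\log n$.

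\emph{The degree bound.} For the upper bound, $\overline{\Delta}\le\Delta_s\le\Delta$, since the largest degree of any vertex over all time steps is at most its degree in the underlying graph. For the lower bound, every vertex of an always-connected temporal graph has degree at least one in each snapshot. So it suffices to show that the hub $\hub$ in the construction of \cref{thm:main} has degree $\Theta(\Delta)$ in some snapshot, and that the hubs used over time account for a constant fraction of the vertices, or at least enough of them that the average over $v$ is $\Omega(\Delta)$. I would read this off the construction: the rotating-cycle gadget has $\Theta(\Delta)$ hub positions, each rotating vertex becomes the centre of degree $\Theta(\Delta)$ at some time, and these vertices are arranged to form a constant fraction of $V$ (one gadget per group of $\Theta(\Delta)$ vertices, as in the chained rotating stars). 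This is the step I expect to be the main obstacle. If only $\Theta(\Delta)$ vertices ever serve as centres, the average would only be $\Theta(\Delta^2/n)$, so I would check the construction in \cref{sec:variants} and, if necessary, argue that all rotating vertices reach degree $\Theta(\Delta)$ at their turn.

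\emph{Rewriting the time bound.} By \cref{thm:main} the exploration time is at least $\gamma\Delta n(1+\log(n/\Delta))$, so it suffices to show
\[
1+\log(n/\Delta)\ \ge\ \tfrac12\,\frac{\log n}{\log\Delta}.
\]
Write $x=\log\Delta\ge\log\Delta_0\ge1$ and $L=\log n\ge x$. If $x\le L/2$, then $\log(n/\Delta)=L-x\ge L/2\ge L/(2x)$, which gives the inequality. If $x> L/2$, then $L/(2x)<1\le 1+\log(n/\Delta)$. Either way the exploration takes at least $(\gamma/2)(\Delta/\log\Delta)\,n\log n$ steps.

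\emph{The case $\Delta\le\sqrt n$.} Here $\log(n/\Delta)\ge\tfrac12\log n$, so the bound is at least $\gamma\Delta n\cdot\tfrac12\log n=(\gamma/2)\Delta n\log n$. Only trivial facts about the logarithm base are used, so the constant $\gamma/2$ is justified for any fixed base.
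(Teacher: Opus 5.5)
Your exploration-time bound is correct and follows the paper's argument. Your split $x\le \log n/2$ versus $x>\log n/2$ is exactly the paper's split $\Delta\le\sqrt n$ versus $\Delta>\sqrt n$. The inequality $1+\log(n/\Delta)\ge \log n/(2\log\Delta)$ then follows in both cases as you say, using $\log\Delta\ge1$ and $\Delta<n$. The upper bound $\overline{\Delta}\le\Delta$ is also fine.

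The gap is the lower bound $\overline{\Delta}=\Omega(\Delta)$, which is half of the corollary. You leave it as an unverified plan, and the plan rests on a wrong picture of the construction. There are no chained gadgets with $\Theta(\Delta)$ hub positions each. There is a single cycle $r_1,\ldots,r_k$ with $k=n-bp-bq>n/2$ (\cref{lem:params}, using $bp\le 3n/64$ and $bq\le n/3$), and the hub $\hub_t=r_t$ travels over all of it. The argument needs three facts.
\begin{enumerate}[(i)]
\item Every cycle vertex is the hub at some step before the graph freezes at step $\tau L$. The hubs of steps $1,\ldots,k$ are $r_1,\ldots,r_k$, and these steps lie in phase~$1$ because $L=bk/8\ge k$ when $b\ge8$. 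This needs checking: with $\tau=1$ and $L<k$, some cycle vertices would never be the hub.
\item At that step the cycle vertex has degree $b+1$ in $G_t$, by \cref{prop:basic}.
\item For general $\Delta$ the construction is run with the power of two $\Delta'>\Delta/2$, so $b=\Delta'/64>\Delta/128$.
\end{enumerate}
Hence $d_{\max}(v)\ge b+1$ for the $k>n/2$ cycle vertices, and $\overline{\Delta}\ge (k/n)(b+1)>b/2>\Delta/256$. Your worry that the average might only be $\Theta(\Delta^2/n)$ does not arise in this construction. As written, however, your proposal does not rule it out, so this claim remains unproved.
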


The corollary, which we prove in \cref{sec:proof} once the parameters are fixed, shows that
one can take $f(x)=\gamma x/(2\log x)$ for $x\ge\Delta_0$ in Problem~3.1: for
$\Delta_0\le D\le n-1$, the temporal graph constructed for $\Delta=D$ has average temporal
maximum degree at most $D$, and every exploration of it takes at least $f(D)\,n\log n$ steps.
The order of $f$ is best possible up to a constant factor: when $n=D+1$, the lower bound
$f(D)\,n\log n$ cannot exceed the universal $O(n^2)$ upper bound, which gives
$f(D)=O(D/\log D)$.

\subsection{The construction in outline}\label{sec:outline}

The temporal graph has three ingredients: a cycle, a set of branches hanging from one
vertex of the cycle, and a schedule that reassigns vertices among the branches from one
phase to the next.  \Cref{fig:snapshot} shows two consecutive snapshots; the precise
definitions are in \cref{sec:construction}.

\begin{figure}[t]
\centering
\begin{tikzpicture}[
  ring/.style={circle,draw,fill=white,inner sep=1.5pt},
  hubv/.style={circle,draw,fill=black,inner sep=2.1pt},
  gate/.style={rectangle,draw,fill=white,inner sep=2.1pt},
  target/.style={circle,draw,fill=black!30,inner sep=1.5pt},
  fixed/.style={circle,draw,fill=black,inner sep=1.5pt},
  every node/.style={font=\footnotesize},
  scale=0.92]
  \coordinate (C) at (-1.75,0);
  \foreach \i in {0,...,13} {
    \pgfmathsetmacro{\ang}{\i*360/14+12.857}
    \node[ring] (c\i) at ($(C)+(\ang:1.75)$) {};
  }
  \foreach \i [evaluate=\i as \j using int(\i+1)] in {1,...,12} { \draw (c\i)--(c\j); }
  \draw (c13)--(c0);
  \node[hubv] at (c0) {};
  \node[anchor=east,xshift=-5pt] at (c0) {$\hub_t$};
  \draw[-{Stealth},thick] ($(C)+(118:2.15)$) arc (118:62:2.15);
  \node[anchor=east] at ($(C)+(122:2.2)$) {rotation};
  \foreach \j/\y in {1/1.3, b/-1.3} {
    \foreach \u/\x in {1/1.0, 2/1.65, 3/2.3, 4/2.95, 5/3.6} { \node[gate] (g\j\u) at (\x,\y) {}; }
    \draw (g\j1)--(g\j2)--(g\j3)--(g\j4)--(g\j5);
    \node[fixed] (rho\j) at (4.3,\y) {};
    \draw (g\j5)--(rho\j);
    \foreach \z/\x in {1/4.95, 2/5.6, 3/6.25, 4/6.9} { \node[target] (x\j\z) at (\x,\y) {}; }
    \draw (rho\j)--(x\j1)--(x\j2)--(x\j3)--(x\j4);
    \draw (c0)--(g\j3);
    \node[anchor=west] at (7.15,\y) {branch $\j$};
  }
  \node[above=2pt of g11] {$g_{1,1}$};
  \node[above=2pt of g13] {$g_{1,u(t)}$};
  \node[above=2pt of g15] {$g_{1,p}$};
  \node[above=2pt of rho1] {$\rho_1$};
  \draw[decorate,decoration={brace,amplitude=4pt}] (0.85,1.95) -- (3.75,1.95)
    node[midway,above=4pt] {gates};
  \draw[decorate,decoration={brace,amplitude=4pt}] (4.15,1.95) -- (7.05,1.95)
    node[midway,above=4pt] {targets $X_{i,1}$};
  \node[below=2pt of gb1] {$g_{b,1}$};
  \node[below=2pt of gb3] {$g_{b,u(t)}$};
  \node[below=2pt of gb5] {$g_{b,p}$};
  \node[below=2pt of rhob] {$\rho_b$};
  \node at (4.0,0) {$\vdots$};
  \begin{scope}[shift={(0,-4.6)}]
  \coordinate (C) at (-1.75,0);
  \foreach \i in {0,...,13} {
    \pgfmathsetmacro{\ang}{\i*360/14+12.857}
    \node[ring] (c\i) at ($(C)+(\ang:1.75)$) {};
  }
  \foreach \i [evaluate=\i as \j using int(\i+1)] in {1,...,12} { \draw (c\i)--(c\j); }
  \draw (c0)--(c1);
  \node[hubv] at (c13) {};
  \node[anchor=east,xshift=-5pt] at (c13) {$\hub_{t+1}$};
    \node[anchor=east,xshift=-5pt] at (c0) {$\hub_t$};
  \foreach \j/\y in {1/1.3, b/-1.3} {
    \foreach \u/\x in {1/1.0, 2/1.65, 3/2.3, 4/2.95, 5/3.6} { \node[gate] (g\j\u) at (\x,\y) {}; }
    \draw (g\j1)--(g\j2)--(g\j3)--(g\j4)--(g\j5);
    \node[fixed] (rho\j) at (4.3,\y) {};
    \draw (g\j5)--(rho\j);
    \foreach \z/\x in {1/4.95, 2/5.6, 3/6.25, 4/6.9} { \node[target] (x\j\z) at (\x,\y) {}; }
    \draw (rho\j)--(x\j1)--(x\j2)--(x\j3)--(x\j4);
    \draw (c13)--(g\j4);
    \node[anchor=west] at (7.15,\y) {branch $\j$};
  }
  \node[above=2pt of g11] {$g_{1,1}$};
  \node[above=2pt of g14] {$g_{1,u(t+1)}$};
  \node[above=2pt of rho1] {$\rho_1$};
  \node[below=2pt of gb1] {$g_{b,1}$};
  \node[below=2pt of gb4] {$g_{b,u(t+1)}$};
  \node[below=2pt of rhob] {$\rho_b$};
  \node at (4.0,0) {$\vdots$};
  \end{scope}
\end{tikzpicture}
\caption{Two consecutive snapshots of the construction within one phase, at time $t$ (top)
and $t+1$ (bottom), with the first and the last of the $b$ branches drawn (notation as in
\cref{sec:construction}).  The cycle is present except for the edge trailing the hub.  Every
branch is attached to the hub through the gate of the hub's class, $g_{j,u(t)}$ at time $t$;
the gates before the attachment point form a dead end, and the last gate is joined to the
fixed target $\rho_j$, the first vertex of the current target path.  From one step to the
next the hub and the missing edge advance by one position, clockwise in the picture, so the
missing edge becomes the one between $\hub_t$ and $\hub_{t+1}$, and every branch is attached
at $g_{j,u(t+1)}$ instead, here the next gate along the gate path.  Nothing else changes, since
the targets are reassigned only between phases.}
\label{fig:snapshot}
\end{figure}
\subparagraph*{The shape of a snapshot.}
There is a cycle on $k=\Theta(n)$ vertices.  At every
time step all edges of the cycle are present except one, the missing edge, so what remains
of the cycle is a path, and one of its two endpoints is called the \emph{hub}.  There are
$b=\Theta(\Delta)$ branches.  Each branch is a path that runs through $p=\Theta(n/\Delta)$
\emph{gates} and then through $q=\Theta(n/\Delta)$ \emph{targets}, and the hub is joined to
one of its gates; no other vertex of a branch is adjacent to anything outside its branch.
Every snapshot is therefore a spanning tree made of the cycle path and the $b$ branches, each
joined to the hub by a single edge, and the hub, of degree $b+1$, is its only vertex of
degree larger than three.
\subparagraph*{The cycle as a clock.}
From one time step to the next, the missing edge advances by
one position along the cycle, and the hub, which is the endpoint of that edge in the
direction of motion, advances with it.  This is the mechanism of the rotating
star~\cite{EHK21}, transferred to a cycle.  Suppose the agent is inside some branch and
wants to reach another branch.  The only way out is through the hub, so at some step it
moves from its branch onto the hub.  At the next step, however, the missing edge has moved:
it is now exactly the edge between the vertex where the agent stands and the new hub.  The
agent is at one end of the cycle path and the hub at the other, $k-1$ edges apart.  The hub
keeps advancing, which shortens the path between them by one edge per step, and the agent
can walk towards it, which shortens the path by one more edge.  Hence the two meet only after
about $k/2$ steps, and until then the agent cannot enter any branch.  \Cref{lem:ring} makes
this precise.  Every visit to a new branch therefore costs $\Omega(k)$ time steps, so during
a phase of $L=bk/8=\Theta(\Delta n)$ steps the agent visits fewer than half of the $b$
branches.

\subparagraph*{Reassigning the targets.}
Time is divided into phases of $L$ steps, and within a
phase the snapshots differ only by the rotation, which moves the missing edge, the hub and
the $b$ edges that join the hub to the branches; the targets stay in their branches.  At
the beginning of each phase they are reassigned among the branches according to a fixed
schedule, every branch receiving $q$ targets.  In every phase the rotation leaves
more than half of the branches unvisited, and with them more than half of the targets. 
A reassignment can also put the agent in a new branch without a
move, when the target it stands on is reassigned, but this gains it at most one branch per
phase, and \cref{lem:half} allows for it.  The schedule is designed so that, whichever branches the agent leaves unvisited in each
phase, as long as they are more than half, some target stays unvisited through all $\tau$
phases, so no exploration can finish within $\tau L$ steps. 
It is built from walks on a constant-degree expander on the
branches, one walk per target, as \cref{sec:assignment} explains.  There are $b\,c^{\tau-1}$
targets for a constant $c$, one for each first branch and each choice of the $\tau-1$
steps, and since the graph has room for $\Theta(n)$ targets, this allows
$\tau=\Theta(1+\log(n/\Delta))$ phases and gives the bound
$\Omega(\Delta n\,(1+\log(n/\Delta)))$.

\subparagraph*{Bounded underlying degree.}
Two devices serve only the bound on the underlying
degree, the gates and the order of the targets; for a bound on the degree of each $G_t$
alone, the targets could hang directly from the hub (\cref{sec:variants}).  The gates are
needed because the hub moves over all $k$ cycle vertices, so a vertex joined directly to it
would acquire $k$ neighbours over time.  The cycle vertices are therefore divided into
$\Theta(n/\Delta)$ classes of $\Theta(\Delta)$ vertices each, by the residue of their index,
and each branch carries one gate per class, the hub being joined to the gate of its own
class; a gate is thus joined, over time, only to the $\Theta(\Delta)$ cycle vertices of its
class.  The order of the targets matters for a similar reason: with an arbitrary order, a
target could get new neighbours on its target path in every phase, up to $2\tau$ in all,
which is more than $\Delta$ when $\Delta$ is small compared with the number of phases.  We
order the targets by a Gray code on the step sequences of the expander walks, and then the
target paths of all phases lie in one fixed graph of maximum degree at most four.  The same ordering
might help in other constructions in which paths change from phase to phase but must stay
inside a graph of bounded degree.  Neither device helps the agent, since it
still has to pass through the hub to get from one branch to another.

\subsection{Related work}\label{sec:related}

Temporal graphs appear in the literature under several names, among them time-varying
graphs, evolving graphs and dynamic networks; the surveys~\cite{CFQS12,HS12,Mic16} give an
overview.

A line of work complementary to degree bounds studies exploration under structural
restrictions on the underlying graph. Erlebach et al.~\cite{EHK21} gave the
first subquadratic bounds, $O(n^{3/2}w^{3/2}\log n)$ for treewidth~$w$, $O(n^{9/5}\log n)$
for planar graphs and $O(n\log^3 n)$ for $2\times n$ grids, and Adamson, Gusev, Malyshev
and Zamaraev~\cite{AGMZ22} improved the first two to $O(wn^{3/2}\log n)$ and
$O(n^{7/4}\log n)$.  Both were improved further, to $O(n^{3/2}\sqrt{w\log n})$ and $O(n^{3/2}\sqrt{\log n})$,
by the general bound of Bastide et al.~\cite{BGMR25}. 
For bounded~$w$, Adamson and Spirakis~\cite{AS26} recently obtained $O(n^{4/3}w^{11/2}\log^{5/2}n)$.

Exploration is faster when the temporal graph is random.  Erlebach et al.~\cite{EHK21} showed that if every edge of an $m$-edge graph is present in every
time step independently, with probabilities that sum to at least a constant across every
cut, an exploration of $O(m\log n)$ steps exists with high probability.  Baguley
et al.~\cite{BGKSSZ26} study random spanning tree models, in which every $G_t$ is a spanning
tree drawn independently from a fixed distribution; with high probability such a model can
be explored in $O(n^{3/2})$ steps, which is tight for a suitable distribution on
stars, and in $O(m)$ steps when all its trees are spanning trees of one fixed graph with
$m$ edges.  They conjecture that $O(\sqrt{\Delta_s}\,n)$ steps suffice with high
probability when all trees have maximum degree $\Delta_s$.  For adversarial spanning trees
the dependence on the degree is at least linear instead: the chained rotating stars
of~\cite{EHK21} have snapshots of maximum degree $\Theta(\Delta)$ and force $\Omega(\Delta n)$
steps, and those of \cref{thm:main} force $\Omega(\Delta n\,(1+\log(n/\Delta)))$.

Exploration is also fast when few edges are missing: Lahtin and Zamaraev~\cite{LZ26}
showed that always-connected temporal graphs whose snapshots miss at most $\kappa$ edges of
the underlying graph can be explored in $O(\kappa n\log\kappa)$ steps for $\kappa\ge2$, improving the
$O(\kappa n\log n)$ bound of Erlebach and Spooner~\cite{ES22}, whereas our snapshots miss
$\Theta(\Delta n)$ underlying edges.  Several variants change what the agent may do: it may
make two moves per time step~\cite{EKLSS19} or traverse any number of edges of a snapshot in
one step~\cite{ES20}, it may have to move at every step~\cite{BCE23,AN25}, or, on a temporal
star, return to the centre after visiting all leaves~\cite{AMRS21}.  Others change the
objective, minimising the number of traversed edges~\cite{BST25} or asking for every edge to
be traversed~\cite{BM23}, and the parameterised complexity of exploration has also been
studied~\cite{ES23,AFGW23}.

\subsection{Organisation}

\Cref{sec:model} fixes the model.  \Cref{sec:construction} presents the construction: it
chooses the parameters, describes the snapshots and states their basic properties, taking the
assignment of targets to branches as a black box.  \Cref{sec:ring} analyses the rotating
cycle, and \cref{sec:assignment} builds the assignment.  \Cref{sec:proof} sets the admissible
range of $\Delta$, states the estimates on the parameters and proves
\cref{thm:main,cor:problem}.  \Cref{sec:variants} collects limits and variants of the
construction, and \cref{sec:conclusion} discusses open questions. \Cref{tab:notation}, after the references, collects the notation.

\section{Preliminaries}\label{sec:model}

All graphs are finite and undirected, and simple unless stated otherwise.  For a positive integer $\ell$ we write
$[\ell]=\{1,\ldots,\ell\}$.

A \emph{temporal graph} is a sequence $\TG=(G_t)_{t\ge1}$ of graphs $G_t=(V,E_t)$ on a
common finite vertex set $V$, and we write $n=|V|$.  We call $G_t$ the \emph{snapshot} at time step~$t$.  The temporal
graph is \emph{always-connected} if every $G_t$ is connected.  The \emph{underlying
graph} of $\TG$ is ${\und(\TG)=(V,\bigcup_{t\ge1}E_t)}$.

A \emph{temporal walk} of $\ell$ time steps is a sequence $W=(v_1,\ldots,v_{\ell+1})$ of
vertices such that, for every $t\in[\ell]$, either $v_{t+1}=v_t$ or $v_tv_{t+1}\in E_t$.  We
think of $v_t$ as the position of the agent at the start of time step $t$; during step $t$
the agent either waits or moves along an edge of $G_t$, and $v_{t+1}$ is its position at
the start of step $t+1$.  The walk \emph{visits} the vertices $v_1,\ldots,v_{\ell+1}$.  An
\emph{exploration} from $v^\ast$ is a temporal walk with $v_1=v^\ast$ that visits every vertex
of $V$, and the \emph{exploration time} $\texp(\TG,v^\ast)$ is the smallest number of time
steps of such a walk.  If $\TG$ is always-connected, every vertex is reachable from every position within $n-1$
steps, so $\texp(\TG,v^\ast)\le (n-1)^2$~\cite{MS16}.

We use three degree parameters.  For a vertex $v$ let $d_{\max}(v)=\max_{t\ge1}\deg_{G_t}(v)$
be its largest degree over all time steps.  We write $\Delta$ for the maximum degree of the
underlying graph $\und(\TG)$, $\Delta_s=\max_{v\in V}d_{\max}(v)$ for the largest degree of a single snapshot, and $\overline{\Delta}=\frac1n\sum_{v\in V}d_{\max}(v)$ for the average temporal
maximum degree, the parameter introduced in~\cite{BGMR25}.  Since $E_t$ is a subset of the
edge set of the underlying graph, $\deg_{G_t}(v)\le d_{\max}(v)\le\deg_{\und(\TG)}(v)$ for
all $v$ and $t$, whence $\overline{\Delta}\le \Delta_s\le \Delta$. We also use that a connected graph on $n$
vertices with exactly $n-1$ edges is a tree, to check that our snapshots are spanning
trees.

\section{The construction}\label{sec:construction}

Throughout, $d=82$ is the degree of the expander graph of \cref{lem:expander} and
$c=1+d=83$ is the number of permutations we use, the identity and the $d$ permutations
derived from that graph.  Both are fixed once and for all, and the constants $\gamma$ and
$\Delta_0$ of \cref{thm:main} depend on them.  The degree bound $\Delta$ is instead a
parameter, and in this section and in \cref{sec:proof} we assume that it is a power of two
with $\Delta\ge512$, so that $b=\Delta/64$ and $L=(\Delta/512)k$ are integers; the
general case is derived from this one at the end of \cref{sec:proof}.

\subsection{Parameters}\label{sec:parameters}

The temporal graph has three kinds of vertices: a cycle of $k$ vertices, divided into $p$
classes of at most $\Delta/2$ vertices; $b$ branches with $p$ gates each; and $bq$ targets,
$q$ for each branch.  Time is divided into $\tau$ phases of $L$ steps.  Given $n$ and $\Delta$
with $\Delta\le n-1$, these parameters are chosen as follows.  We set
\begin{equation}\label{eq:params}
  b=\frac{\Delta}{64},\qquad
  \tau=\max\bigl\{i\ge1:\ b\,c^{\,i-1}\le n/3\bigr\},\qquad
  q=c^{\,\tau-1},
\end{equation}
and then
\begin{equation}\label{eq:params2}
  p=\Bigl\lceil\frac{2n}{\Delta}\Bigr\rceil,\qquad
  k=n-bp-bq,\qquad
  L=\frac{bk}{8}=\frac{\Delta}{512}\,k.
\end{equation}
Their orders of magnitude are recorded in \cref{lem:params}; for the moment we only need
$b\ge8$, $p\ge2$, $k\ge3$ and $L\ge1$.

\subsection{Vertices}\label{sec:vertices}

The vertex set consists of three kinds of vertices.
\begin{itemize}
\item The \emph{cycle} on the vertices $r_1,\ldots,r_k$, in this cyclic order.  Indices of
  cycle vertices are read modulo~$k$, so that $r_0=r_k$ and $r_{k+1}=r_1$.  The cycle
  vertices are divided into $p$ \emph{classes} according to the residue modulo~$p$ of their
  index $z\in[k]$: the class of index $u\in[p]$ is
  \[
    S_u=\{r_z:\ z\in[k],\ z\equiv u\pmod p\}.
  \]
  Every class has $\lfloor k/p\rfloor$ or $\lceil k/p\rceil$ vertices, and
  $\lceil k/p\rceil\le\Delta/2$ because $k<n\le p\Delta/2$.
\item For every branch $j\in[b]$, the \emph{gates} $g_{j,1},\ldots,g_{j,p}$.  The gate
  $g_{j,u}$ serves the class $S_u$.
\item A set $X$ of $bq$ \emph{targets}.
\end{itemize}
In total there are $k+bp+bq=n$ vertices, by the choice of~$k$.

\subsection{The assignment of targets to branches}\label{sec:assignment-abstract}

The targets are divided among the branches, in a way that changes from phase to phase.  We
describe the requirements here and construct the assignment in \cref{sec:assignment}.

\begin{definition}\label{def:assignment}
A \emph{branch assignment} of a set $X$ of $bq$ targets with $\tau$ phases consists of one
partition of $X$ for every phase $i\in[\tau]$,
\[
  X=X_{i,1}\cup X_{i,2}\cup\cdots\cup X_{i,b},\qquad |X_{i,j}|=q\ \ (j\in[b]).
\]
The set $X_{i,j}$ is the set of targets of branch $j$ in phase $i$.
\end{definition}

\begin{definition}\label{def:evasive}
A branch assignment is \emph{evasive} if for every choice of sets
$A_1,\ldots,A_\tau\subseteq[b]$ with $|A_i|\ge b/2$ for all~$i$ there is a target $x\in X$
such that, for every $i\in[\tau]$, $x\in X_{i,j}$ for some $j\in A_i$.
\end{definition}

In the proof, $A_i$ will be the set of branches that the agent does not visit during
phase $i$.  The evasive property then gives a single target that lies in an unvisited
branch in every phase, and so is not visited during the $\tau$ phases.

Inside a branch the targets are arranged on a path.  To keep the underlying degree
bounded, we require the paths of a branch to start at the same vertex in every phase, and the
union of all paths to have bounded degree.  A \emph{path realisation} of a branch assignment
consists of vertices $\rho_1,\ldots,\rho_b\in X$ and paths $P_{i,j}$, one for every phase
$i$ and branch $j$, such that $P_{i,j}$ has vertex set $X_{i,j}$ and first vertex $\rho_j$.
In particular $\rho_j\in X_{i,j}$ for every $i$: the target $\rho_j$ never changes branch.  The
\emph{skeleton} of the realisation is the graph on $X$ whose edge set is the union of
the edge sets of all paths $P_{i,j}$.

\begin{theorem}\label{thm:assignment}
There is a constant $b_0$ with the following property.  For all integers $b\ge b_0$ and
$\tau\ge1$, with $q=c^{\tau-1}$, one can construct in time polynomial in $bq$ an
evasive branch assignment of a set of $bq$ targets with $\tau$ phases, together with a
path realisation whose skeleton has maximum degree at most four.
\end{theorem}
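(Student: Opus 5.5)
Targets will be words: $X=[b]\times[c]^{\tau-1}$, so $|X|=bq$.
Fix a $d$-regular expander $H$ on $[b]$, as in \cref{lem:expander}. Decompose its edge multiset into $d$ permutations $\pi_1,\dots,\pi_d$ of $[b]$; for instance, take a regular bipartite double cover and apply König's theorem to split it into perfect matchings. Set $\pi_0=\mathrm{id}$, giving $c=d+1$ permutations.

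A target $x=(j_1;s_1,\dots,s_{\tau-1})$ is placed in branch $j_1$ in phase $1$ and in branch $j_{i+1}=\pi_{s_i}(j_i)$ in phase $i+1$. Since every $\pi_s$ is a bijection, each branch receives exactly $c^{\tau-1}=q$ targets in every phase.

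For evasiveness I would run a counting argument along the walks. Let $N_i$ be the set of prefixes $(j_1;s_1,\dots,s_{i-1})$ whose branches $j_1,\dots,j_i$ lie in $A_1,\dots,A_i$ respectively. For each branch $j$, let $n_i(j)$ be the number of such prefixes ending at $j$. Then
\[
n_{i+1}(j')=\mathbf 1[j'\in A_{i+1}]\sum_{s}n_i(\pi_s^{-1}(j')),
\]
that is, $n_{i+1}=\mathbf 1_{A_{i+1}}\cdot(I+M)n_i$, where $M$ is the adjacency operator of $H$. I would show by induction that $\|n_i\|_2\ge \beta^{i-1}\|n_1\|_2$ for some $\beta>0$. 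This needs a Cauchy–Schwarz/spectral estimate with $\lambda(H)$ small relative to $d$: restricting $(I+M)v$ to a set of density $\ge1/2$ retains a constant fraction of the norm. The clean way to get this is an inverse expander mixing lemma. Write $v=\bar v\mathbf 1+v^\perp$; the component $\bar v\mathbf 1$ is amplified by $c$, the component $v^\perp$ by at most $1+\lambda$, and the density-$1/2$ projection keeps at least half of $\|\mathbf 1\|^2$.

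The induction does not close directly, because $v^\perp$ may dominate. So I would track the ratio between the mean and the orthogonal part, or bound $\langle \mathbf 1,n_{i+1}\rangle$ from below using the mixing lemma: a large orthogonal part is damped to $\lambda\|v^\perp\|$ while the mean is multiplied by roughly $c/2$. The large choice $d=82$ suggests the authors need $\lambda\le 2\sqrt{d}\approx 18$ against $c/2\approx41$. With these numbers positivity of $\sum n_\tau$ follows, giving a surviving target. I expect \textbf{this step to be the main obstacle}, specifically getting a potential function that is preserved under the intersections with arbitrary half-sets.

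For the path realisation, order the suffixes $(s_1,\dots,s_{\tau-1})$ by a reflected $c$-ary Gray code, in which consecutive words differ in one coordinate by $\pm1$. In phase $i$, list the targets of branch $j$ as follows. Only the first $i-1$ steps determine the branch, so for fixed $(s_1,\dots,s_{i-1})$ the set of possible $j_1$ is a single value, namely $j$ pulled back through the walk. Order the targets of branch $j$ in phase $i$ by the Gray code on the full word, with the word's coordinates reordered so that the last steps $s_i,\dots,s_{\tau-1}$ vary fastest. The path starts with $\rho_j$, chosen as the target with $j_1=j$ and all $s=0$; this target stays in branch $j$ in every phase because $\pi_0=\mathrm{id}$.

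Consecutive targets then differ in one step coordinate by $\pm1$, with the start branch adjusted so that the phase-$i$ branch is preserved. Each target has at most two neighbours per coordinate position in the Gray code, and the reflected structure should make the neighbour at a given position independent of the phase. Then the skeleton is a subgraph of a fixed degree-four graph: two edges from the innermost changing digit and two from the boundary transitions. The delicate point here is to choose the reflected ordering so that transitions in different phases reuse the same edges, bounding the degree by four. Polynomial time is clear once an explicit expander is fixed.
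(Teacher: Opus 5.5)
Your construction is the paper's: targets $x_{w,a}$ indexed by a row and a word, labels obtained by applying $\pi_{w_1},\pi_{w_2},\dots$ with $\pi_0$ the identity, permutations from a K\H{o}nig decomposition, paths through the columns in one Gray code order, and $\rho_j$ in the all-zero column. However, neither of the two substantive properties is proved.

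For evasiveness you leave the decisive step open. As you observe, the one-step norm estimate fails for concentrated vectors: if $v$ lives on one branch, $(I+M)v$ lives on at most $c$ branches, which a half-set avoids once $b\ge 2c$. You supply no invariant that excludes such vectors. The missing idea is that evasion needs only one surviving walk, so you should track supports, not counts. Put $Y_1=A_1$ and $Y_{i+1}=N[Y_i]\cap A_{i+1}$, where $N[Y]$ is $Y$ together with its neighbours in $H$. The expander gives vertex expansion $|Y|\ge b/4\Rightarrow|N[Y]|\ge 3b/4$. Indeed, there are no edges between $Y$ and the set $Z$ of vertices outside $N[Y]$. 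The nontrivial eigenvalues are at most $2\sqrt{d-1}+1=19\le d/4$ in absolute value, so the mixing lemma gives $d|Y||Z|/b\le (d/4)\sqrt{|Y||Z|}$, hence $|Z|\le b/4$. This inequality, not a comparison with $c/2$, is what $d=82$ is for. By induction $|Y_{i+1}|\ge 3b/4+b/2-b=b/4$. Now choose $y_\tau\in Y_\tau$, and backwards $y_i\in Y_i$ with letters $w_i$ such that $y_{i+1}=\pi_{w_i}(y_i)$; this yields the evading target $x_{w,y_1}$. A ratio invariant such as $\|v^\perp\|_2\le 1.2\,\bar v\sqrt b$ can in fact be pushed through with these constants. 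It requires estimating the mean and the norm of $\mathbf 1_{A}(I+M)v$ jointly through their common cross term, which is far more delicate than the statement needs.

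For the skeleton, your justification of degree four is not the right mechanism. Two neighbours per coordinate position would only give degree $2(\tau-1)$, and neither the reflection nor the $\pm1$ property of the code matters. What is needed is a single order of the words, the same in every phase and branch, in which consecutive words $w,w'$ differ in exactly one position $\mu$. A phase-dependent reordering of coordinates would multiply the edges. Let $M_i$ be the perfect matching between $C_w$ and $C_{w'}$ that pairs targets with equal labels in phase $i$. For $i\ne\mu$, both columns are relabelled at the $i$-th boundary by the same permutation $\pi_{w_i}$, which preserves equality of labels, so $M_{i+1}=M_i$. Hence $M_1=\dots=M_\mu$ is the identity on rows and $M_{\mu+1}=\dots=M_\tau$. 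A target has skeleton neighbours only in the two adjacent columns, so it has at most two on each side. This lemma is the whole content of the degree bound, and your proposal does not contain it.
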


\subsection{Time steps}\label{sec:steps}

Fix an evasive branch assignment of $X$ with a path realisation as in
\cref{thm:assignment}, which applies because $b\ge b_0$ for every admissible $\Delta$
(\cref{sec:proof}).  Phase $i\in[\tau]$ consists of the time steps
$(i-1)L+1,\ldots,iL$, and we call the passage from one phase to the next a \emph{phase
boundary}; there are $\tau-1$ of them.  For a time step $t$ let $\hub_t=r_t$, the index
being read modulo $k$, and let $u(t)\in[p]$ be the index of its class, so that
$\hub_t\in S_{u(t)}$.  We call $\hub_t$ the \emph{hub} at time $t$.  The hub
advances by one position per step, so it travels once around the cycle every $k$
steps.

For every time step $t$ of phase $i$, the graph $G_t$ has exactly the following edges; see
\cref{fig:snapshot}.
\begin{enumerate}[(i)]
\item All edges $r_zr_{z+1}$ of the cycle except $r_{t-1}r_t$.  These
  $k-1$ edges form a Hamiltonian path of the cycle, which we always read from the hub
  outwards: the \emph{cycle path} is the sequence
  $Q_t=(\hub_t,\,r_{t+1},\,r_{t+2},\,\ldots,\,r_{t-1})$.
\item For every branch $j$, the \emph{hub edge} $\hub_t\,g_{j,u(t)}$ from the hub to the
  gate of its class.
\item For every branch $j$, the \emph{gate path} $g_{j,1}g_{j,2}\cdots g_{j,p}$ and the
  edge $g_{j,p}\rho_j$.
\item For every branch $j$, the target path $P_{i,j}$.
\end{enumerate}
\emph{Branch $j$ in phase $i$} is the vertex set $\{g_{j,1},\ldots,g_{j,p}\}\cup X_{i,j}$.
For $t>\tau L$ we set $G_t=G_{\tau L}$, so that $\TG$ is defined at all times and constant
from step $\tau L$ on, with $\hub_{\tau L}$ as its hub.

\subsection{Basic properties}

\begin{proposition}\label{prop:basic}
Every $G_t$ is a spanning tree of the vertex set.  Its only vertex of degree larger than
three is the hub, which has degree $b+1$.  The underlying graph has maximum degree at most
$\Delta/2+2$.
\end{proposition}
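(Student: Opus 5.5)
We prove the three claims in turn, working directly from the edge list (i)--(iv) of \cref{sec:steps} and the properties of the path realisation from \cref{thm:assignment}.

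\textbf{Spanning tree.} First we count the edges of $G_t$. The cycle contributes $k-1$ edges by (i). There are $b$ hub edges by (ii). The gate paths and the edges $g_{j,p}\rho_j$ contribute $b(p-1)+b=bp$ edges by (iii). The target paths $P_{i,j}$ contribute $b(q-1)$ edges by (iv), since $P_{i,j}$ has vertex set $X_{i,j}$ of size $q$. The total is
\[
  k-1+b+bp+bq-b=n-1 .
\]
All these edges are distinct, because they join different kinds of vertex pairs, and the paths $P_{i,j}$ are vertex-disjoint. Next we show connectivity. The cycle path $Q_t$ connects all cycle vertices. For each branch $j$, the gate path is joined to $\hub_t$ by the hub edge. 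The gate path reaches $\rho_j$, and $P_{i,j}$ starts at $\rho_j$ and covers $X_{i,j}$. Since the sets $X_{i,j}$ partition $X$, every vertex is reached. A connected graph on $n$ vertices with $n-1$ edges is a tree, as noted in \cref{sec:model}. For $t>\tau L$ the same holds, since $G_t=G_{\tau L}$.

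\textbf{Degrees in a snapshot.} We go through the vertex kinds.
\begin{itemize}
\item The hub has two cycle neighbours minus the missing edge $r_{t-1}r_t$, which leaves only $r_{t+1}$, plus $b$ hub edges. Its degree is $b+1$, which exceeds three since $b\ge8$. The edge $r_tr_{t+1}$ is present because $k\ge3$.
\item Every other cycle vertex has degree at most $2$.
\item A gate $g_{j,u}$ has at most two gate-path or $\rho_j$ neighbours, plus possibly the hub. Its degree is therefore at most $3$.
\item A target has at most two neighbours on its path $P_{i,j}$. The target $\rho_j$ additionally has $g_{j,p}$, but it is an endpoint of $P_{i,j}$, so its degree is at most $2$.
\end{itemize}

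\textbf{Underlying degree.} This is the only step with substance, and we again treat each vertex kind.
\begin{itemize}
\item A cycle vertex $r_z$ is adjacent over time to $r_{z\pm1}$ and to the gates $g_{j,u}$ of its class, one per branch, with $u$ fixed by $z$. Its underlying degree is at most $2+b=2+\Delta/64$.
\item A gate $g_{j,u}$ is adjacent to at most two gate-path or $\rho_j$ neighbours. It is also joined to every hub in $S_u$, since the hub visits every cycle vertex provided the relevant time steps occur. This gives at most $2+\lceil k/p\rceil\le 2+\Delta/2$, using $\lceil k/p\rceil\le\Delta/2$ from \cref{sec:vertices}.
\item For targets, the union of all $P_{i,j}$ is the skeleton, which has maximum degree at most four by \cref{thm:assignment}. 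The target $\rho_j$ gains only $g_{j,p}$, so every target has underlying degree at most $5$.
\end{itemize}

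The final step is to compare these bounds with $\Delta/2+2$. Since $\Delta\ge512$, we have $5\le\Delta/2+2$ and $2+\Delta/64\le\Delta/2+2$. The gate bound is the binding one, so the underlying graph has maximum degree at most $\Delta/2+2$.
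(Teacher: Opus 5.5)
Your proof is correct and follows essentially the same argument as the paper. For the tree claim you use the same edge count $(k-1)+b(p+q)=n-1$ together with connectivity, you check snapshot degrees vertex by vertex in the same way, and you obtain the same underlying-degree bounds $b+2$, $\lceil k/p\rceil+2\le\Delta/2+2$ and $5$ (the last from the skeleton). For the gate bound you only need that the gate's cycle neighbours lie in $S_u$, so the remark that the hub visits every cycle vertex is unnecessary for the upper bound.
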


\begin{proof}
Fix a time step $t\le\tau L$, in phase $i$; for larger $t$ the snapshot is $G_{\tau L}$.
The path $Q_t$ has $k$ vertices and $k-1$ edges.  For every branch $j$, the hub edge, the
gate path, the edge $g_{j,p}\rho_j$ and the target path $P_{i,j}$ together form a tree on
the $p+q$ vertices of branch $j$ plus the hub, with $p+q$ edges, and this tree meets the
rest of the graph only in the hub.  Hence $G_t$ is connected and has $(k-1)+b(p+q)=n-1$
edges, so it is a spanning tree.  In $G_t$ the hub has one cycle neighbour and $b$ gate
neighbours.  A gate has at most two neighbours among the gates and $\rho_j$, and only one
gate per branch carries the hub edge, so every gate has degree at most three; $\rho_j$ has
the neighbour $g_{j,p}$ and at most one neighbour on $P_{i,j}$; all other vertices lie on
paths and have degree at most two.

For the underlying graph, consider a cycle vertex $r\in S_u$.  It has two cycle neighbours,
and it is adjacent to a gate only at times when it is the hub, when its gate neighbours
are $g_{1,u},\ldots,g_{b,u}$.  Its degree is therefore at most $b+2$.  A gate $g_{j,u}$ has at
most two neighbours among the gates and $\rho_j$, and it is joined to a cycle vertex $r$
only through a hub edge at a time when $r$ is the hub and $r\in S_u$; so its cycle
neighbours lie in $S_u$, which has at most $\lceil k/p\rceil\le\Delta/2$ vertices, and its
degree is at most $\Delta/2+2$.  Finally, a target is adjacent to other targets only through edges of the
paths $P_{i,j}$, that is, of the skeleton, which has maximum degree at most four by \cref{thm:assignment}, and its only
possible further neighbour is $g_{j,p}$, when it is $\rho_j$.  Every target therefore has
degree at most five.  All three bounds are at most $\Delta/2+2$, because $b=\Delta/64$ and
$\Delta\ge512$.
\end{proof}

\section{The rotating cycle}\label{sec:ring}

All statements concern a fixed walk $W=(v_1,v_2,\ldots)$ in the temporal graph of
\cref{sec:construction}, during its $\tau$ phases.

We say that the walk \emph{enters} a branch at step $t$ if $v_t=\hub_t$ and $v_{t+1}$ is a
gate, and that it \emph{leaves} a branch at step $t$ if $v_t$ is a gate and
$v_{t+1}=\hub_t$.  The only edges between the cycle and the rest of the graph at time $t$ are the hub edges,
so every passage between the cycle and a branch is an entry or an exit, and different
branches are joined only through the cycle.  Within a phase, therefore, the walk enters
every branch it visits, except possibly the one it is in when the phase begins.  An exit is
expensive: if the walk leaves a branch at step $t$, then at the start of step $t+1$ it
stands on the old hub $\hub_t=r_t$, the last vertex of $Q_{t+1}$, as far
from the new hub as it can be (\cref{fig:snapshot}).

\begin{lemma}\label{lem:ring}
Let $t$ be a time step and suppose that $v_t=r_{t-1}$, the last vertex of $Q_t$.
Then $v_{t'}\ne\hub_{t'}$ for every time step $t'$ with
$t\le t'<t+(k-1)/2$.
\end{lemma}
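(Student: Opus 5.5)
We track a single potential: the number of cycle-path edges separating the agent from the hub. For a time step $s$ with $v_s$ a cycle vertex, write $v_s=r_{s+D_s}$ with $D_s\in\{0,1,\ldots,k-1\}$. Since $Q_s=(r_s,r_{s+1},\ldots,r_{s-1})$, the number $D_s$ is exactly the position of $v_s$ along $Q_s$ counted from the hub, so $v_s=\hub_s$ if and only if $D_s=0$. The hypothesis $v_t=r_{t-1}$ gives $D_t=k-1$.

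The key step is a one-step bound $D_{s+1}\ge D_s-2$, valid as long as $D_s\ge1$. The argument is a case analysis on the move made during step $s$.
\begin{itemize}
\item The agent stays on the cycle at every such step. The only edges of $G_s$ leaving the cycle are the hub edges at $\hub_s$, and the agent is not on $\hub_s$.
\item The agent moves along a cycle edge of $G_s$, or waits. Then $v_{s+1}=r_{s+D_s+e}$ with $e\in\{-1,0,1\}$.
\item The missing edge of $G_s$ is $r_{s-1}r_s$. So the move $e=+1$ from $r_{s-1}$, which would correspond to $D_s=k-1$, is impossible; likewise $e=-1$ from $r_s$ is excluded, but that case has $D_s=0$ anyway.
\end{itemize}
Relative to the new hub $r_{s+1}$, we get $D_{s+1}=D_s+e-1$ computed modulo $k$. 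Whenever $D_s\ge1$ and the move is legal, $D_s+e-1$ already lies in $\{0,\ldots,k-1\}$, so no wraparound occurs. Hence $D_{s+1}\ge D_s-2$, and the agent remains on a cycle vertex.

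Iterating from $D_t=k-1$ by induction on $t'$ gives $D_{t'}\ge k-1-2(t'-t)$ as long as this lower bound stays positive. For $t\le t'<t+(k-1)/2$ we have $2(t'-t)<k-1$, so $D_{t'}\ge1$. Therefore $v_{t'}\ne\hub_{t'}$.

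The one delicate point is that $G_s$ is frozen at $G_{\tau L}$ for $s>\tau L$, so the hub stops rotating there. The paper restricts attention to the $\tau$ phases, so we only need the lemma for $t'\le\tau L$; within that range the recurrence above holds verbatim. The main obstacle is simply the bookkeeping of indices modulo $k$: we must check that the missing edge rules out exactly the move that would wrap $D$ from $k-1$ around to $0$, and that no other wraparound can occur.
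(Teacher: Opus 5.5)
Your potential $D_s$ is the paper's offset, and iterating $D_{s+1}\ge D_s-2$ from $D_t=k-1$ is exactly the paper's argument. The overall route is therefore the same, but the case analysis behind the one-step bound contains a false claim.

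You assert that for $D_s\ge1$ and a legal move, $D_s+e-1$ already lies in $\{0,\ldots,k-1\}$, so that no wraparound occurs. This fails for $D_s=1$, $e=-1$. The agent on $r_{s+1}$ steps back onto the current hub $r_s$ along $r_sr_{s+1}$, which is an edge of $G_s$, since only $r_{s-1}r_s$ is missing. Then $D_s+e-1=-1$, while the true offset is $D_{s+1}=k-1$, because $v_{s+1}=r_s$ is the last vertex of $Q_{s+1}$. This is precisely the case the paper's proof singles out: from offset $1$ the walk may move onto the hub, and its offset becomes $k-1$ again.

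The inequality $D_{s+1}\ge D_s-2$ still holds in this case, trivially, because the wraparound goes up rather than down, so the lemma survives. However, your derivation reads the bound off from $D_{s+1}=D_s+e-1$ without reduction, so it does not cover this case. You need to state it separately, as the paper does.

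Your closing remark also misplaces the role of the missing edge. The move it forbids is the forward move from $r_{s-1}$ at offset $k-1$. That move would land on $r_s$, which again has offset $k-1$ on $Q_{s+1}$, so it would not wrap $D$ to $0$. In fact the missing edge plays no role in this lemma: the argument would go through even if $r_{s-1}r_s$ were present. The only wraparound that actually occurs is the harmless one from $1$ to $k-1$.

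Your restriction to $t'\le\tau L$, where the hub really rotates, is a sensible precision that the paper leaves implicit.
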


\begin{proof}
For a step $t'$ at whose start the walk is on a cycle vertex, let its offset $\ell_{t'}$ be
the position of $v_{t'}$ on $Q_{t'}$, counted from the hub, so that the walk is on the hub
exactly when $\ell_{t'}=0$, and $\ell_t=k-1$.  As long as the offset is positive, the walk
is not on the hub, so it stays on the cycle and moves at most one position along the path,
and the hub advances by one position; the offset therefore decreases by at most two.
From offset~$1$ the walk can also move onto the hub $\hub_{t'}$ itself, but then at the
start of step $t'+1$ it stands on $\hub_{t'}$, the last vertex of $Q_{t'+1}$, and its
offset is again $k-1$. In every case $\ell_{t'+1}\ge\ell_{t'}-2$, hence $\ell_{t+i}\ge k-1-2i$,
which is positive for $i<(k-1)/2$.
\end{proof}

\begin{lemma}\label{lem:entries}
Two consecutive steps at which the walk enters a branch are at least $(k+3)/2$ apart.
\end{lemma}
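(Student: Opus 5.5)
Suppose the walk enters a branch at step $s$ and the next entry is at step $s'>s$. The plan is to show that between these two entries the walk must leave a branch, which by the construction places it at the far end of the cycle path, and then to apply \cref{lem:ring} from that moment. First I will pin down where the walk is. At step $s$ we have $v_s=\hub_s$ and $v_{s+1}$ is a gate. At step $s'$ the walk must be on the hub, $v_{s'}=\hub_{s'}$, a cycle vertex. Only hub edges join the cycle and the branches, so on the way from the gate $v_{s+1}$ back to a cycle vertex the walk must use one. Since $s'$ is the next entry after $s$, that passage is an exit. Let $t_0$ be the last exit step with $s<t_0<s'$; then $s+1\le t_0$, and in fact $t_0\ge s+1$ because $v_{s+1}$ is a gate and an exit at $t_0$ requires $v_{t_0}$ to be a gate.

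One point needs care: a phase boundary can change which branch a target lies in, but it never moves the walk between the cycle and the branches. Gates and targets are never cycle vertices, and the edges incident to the cycle at time $t$ are exactly the cycle path edges and the hub edges, whatever the phase. So the argument of the previous paragraph is unaffected by phase boundaries.

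After the exit at step $t_0$, the walk satisfies $v_{t_0+1}=\hub_{t_0}=r_{t_0}$, which is the last vertex of $Q_{t_0+1}$ because the missing edge at time $t_0+1$ is $r_{t_0}r_{t_0+1}$. \Cref{lem:ring}, applied with $t=t_0+1$, gives $v_{t'}\ne\hub_{t'}$ for all $t_0+1\le t'<t_0+1+(k-1)/2$. Since $v_{s'}=\hub_{s'}$ and $s'>t_0$, it follows that $s'\ge t_0+1+(k-1)/2$. Using $t_0\ge s+1$, this yields $s'-s\ge 2+(k-1)/2=(k+3)/2$, which is the claim.

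The main obstacle is the case analysis showing that an exit lies strictly between the two entries. This is where the observation about phase boundaries is needed, since target reassignment might otherwise seem to let the walk reach the cycle without a recorded exit. A second small point is the strict inequality $t_0\ge s+1$, which is what produces the extra $+1$ that turns $(k+1)/2$ into $(k+3)/2$.
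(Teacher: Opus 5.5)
Your proof is correct and is essentially the paper's own argument: find the exit between the two entries, apply \cref{lem:ring} at the step right after it, and get the extra $+1$ from the fact that the exit comes strictly after the first entry. Taking the \emph{last} exit rather than any exit makes no difference, because after an exit the walk stays on the cycle until its next entry, so there is exactly one exit between two consecutive entries.
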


\begin{proof}
Let $t_1<t_3$ be two consecutive steps at which the walk enters a branch.  At the start of
step $t_1+1$ the walk is on a gate.  In every phase the hub edges are the
only edges between the cycle and the rest of the graph, so the walk returns to the cycle
only by leaving a branch, say at step $t_2\ge t_1+1$.  Then
$v_{t_2+1}=\hub_{t_2}=r_{t_2}$, and \cref{lem:ring} applied to the step
$t_2+1$ shows that the walk is not on the hub at the start of any step $t'$ with
$t_2+1\le t'<t_2+1+(k-1)/2$.  Entering a branch at step $t_3$ requires $v_{t_3}=\hub_{t_3}$,
so $t_3\ge t_2+1+(k-1)/2\ge t_1+2+(k-1)/2=t_1+(k+3)/2$.
\end{proof}

We say that the walk \emph{visits} branch $j$ during phase $i$ if $v_t$ belongs to branch
$j$ in phase $i$ for some $t$ with $(i-1)L+1\le t\le iL+1$.  The last of these positions,
$v_{iL+1}$, is also the first position of phase $i+1$, so it counts for both phases.  This
is deliberate: the targets are reassigned at the phase boundary, and the vertex on which the
walk then stands may belong to one branch under the old assignment and to another under the
new one.

\begin{lemma}\label{lem:half}
Assume $b\ge8$ and $L\le bk/8$.  During every phase the walk visits fewer than $b/2$
branches.
\end{lemma}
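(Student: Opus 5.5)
Fix a phase $i$, consisting of steps $(i-1)L+1,\ldots,iL$ together with the shared final position $v_{iL+1}$. I will bound the number of branches visited in phase $i$ by the number of entries during the phase plus a small number of exceptional branches, and then bound the number of entries using \cref{lem:entries}.

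\textbf{Step 1: which branches can be visited.} Let $B$ be the set of branches visited during phase $i$. The branches are disjoint, and in phase $i$ different branches are joined only through the cycle via hub edges. So a branch $j\in B$ is reached in one of three ways:
\begin{enumerate}[(a)]
\item the walk enters $j$ at some step $t$ with $(i-1)L+1\le t\le iL$;
\item $j$ contains $v_{(i-1)L+1}$, the position at the start of the phase;
\item $j$ contains $v_{iL+1}$; this only matters through the entry at step $iL$, which is already counted in (a).
\end{enumerate}
Indeed, if $v_t$ lies in branch $j$ and $v_{t-1}$ does not, then $v_{t-1}$ is on the cycle, because branches are not adjacent to each other in $G_{t-1}$. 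Hence step $t-1$ is an entry into $j$. The only way to be in a branch without such a preceding step within the phase is case (b). It follows that $|B|\le e+1$, where $e$ is the number of entry steps in the phase.

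\textbf{Step 2: counting entries.} By \cref{lem:entries}, consecutive entries are at least $(k+3)/2$ steps apart. Entries occurring at steps inside an interval of length $L$ therefore number at most
\[
e \le 1+\frac{L-1}{(k+3)/2} < 1+\frac{2L}{k}\le 1+\frac b4,
\]
using $L\le bk/8$. This gives $|B|\le e+1<2+b/4$.

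\textbf{Step 3: comparison with $b/2$.} We need $2+b/4\le b/2$, which is equivalent to $b\ge8$, the stated assumption. Since the inequality on $e$ is strict, we get $|B|<b/2$.

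The main obstacle is the bookkeeping at the phase boundaries. I expect it to be harmless, but it needs care. The starting branch from case (b) costs one extra branch. The shared endpoint $v_{iL+1}$ must not cost a second one: it is in a branch only if it is $v_{iL}$ itself (so already visited) or if it was reached by an entry at step $iL$, which is counted in $e$. If the strict inequality turns out to be tight in an edge case, I would fall back on bounding $e\le\lfloor 2(L-1)/(k+3)\rfloor+1$ directly.
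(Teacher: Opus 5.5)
Your proof is correct and follows essentially the same route as the paper: every branch visited in the phase either contains the first position $v_{(i-1)L+1}$ or is entered at a step of the phase, and \cref{lem:entries} keeps the number of entries below $1+2L/k\le 1+b/4$, giving fewer than $2+b/4\le b/2$ branches. Your separate case (c) is unnecessary, because your earliest-visit argument already covers $v_{iL+1}$, including when the walk reaches it by moving inside the branch rather than by waiting there.
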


\begin{proof}
Fix a phase $i$.  By \cref{lem:entries} consecutive entries are more than $k/2$ steps
apart, and the phase has $L\le bk/8$ steps, so the walk enters a branch fewer than
$1+2L/k\le1+b/4$ times during the phase.  Every branch that the walk visits during the
phase either contains its first position $v_{(i-1)L+1}$ or is entered during the phase,
because within the phase the assignment does not change and the hub edges are the only
edges into a branch.  Hence the walk visits fewer than $2+b/4\le b/2$ branches, the last
inequality because $b\ge8$.
\end{proof}

\section{Evasive assignments}\label{sec:assignment}
This section proves \cref{thm:assignment}.  We first explain the idea informally, since
each ingredient of the construction is easier to follow once the difficulty it answers is
clear; the definitions and proofs follow in \cref{sec:expander,sec:labels,sec:skeleton}.

Two requirements pull against each other.  Evasion asks the targets to change branch,
since one that keeps its branch can be caught as soon as that branch is visited, and the agent
may visit almost half of the branches, different ones in every phase.  The bound on the
underlying degree asks the opposite: the targets adjacent to a target are those before and
after it on the paths $P_{i,j}$, and one whose branch changes unpredictably meets new ones
every phase.

We reconcile the two by letting the branch of a target change only to a neighbour in a
fixed expander $H$ on the $b$ branches.  From $H$ we derive $d$ permutations of the
branches: each sends every branch to a neighbour in $H$, and together they reach every neighbour
of a branch.  To these we add the identity, so that there are $c=d+1$ in all. 
Here $c$ is an absolute constant, the same for every $b$: the expander gives a fixed supply
of relabellings that serves however many branches there are.  The targets sit in a grid of
$b$ rows and $q$ columns and stay there: a phase boundary changes no vertex, only which of
them each path collects.  Every column is assigned a word of length $\tau-1$, the number of
phase boundaries, one column for each word over the alphabet of the permutations; there
are $c^{\tau-1}$ such words, so $q=c^{\tau-1}$.

In every phase every target has a label, the branch it is in.  In phase~$1$ the label of a
target is its row, so the labels of a column are all the branches, in the same order in
every column.  At the boundary from phase $i$ to phase $i+1$
the labels of a column are replaced by their images under the permutation written in
position $i$ of its word.  The common order may break
at the first boundary where two columns use different letters, but it remains true that the
labels of a column are all the $b$ branches, each exactly once: the permutation is one and the
same for the whole column, and a permutation does not merge two labels into one.  Hence in
every phase every branch has exactly one target in every column, and the path of branch $j$
in phase $i$ consists precisely of those targets, one from each column: $q$ in all, the
same for every branch and every phase.  In phase~$1$,
where all the columns are labelled alike, those paths are exactly the rows.

For a single target, the same rule says that its labels, followed over the phases, form a
walk in $H$.  The walk starts at the target's row and at the boundary from
phase $i$ to phase $i+1$ takes the step given by the permutation written in position $i$ of
its column's word, a step to a neighbour, or no step at all if that permutation is the
identity.  The walk therefore depends only on the target's row and on its column's word, not on
the paths around it.  In phase~$1$ the $q$ targets in one row are all in the same branch,
and each then follows the word of its own column.  Since the columns exhaust
all the words, every walk that starts at that branch is realised by a target in
the row, in a column whose word spells it.

What evasion needs is therefore a walk in $H$ that the agent misses.  In each phase the
branches a target could be in, if it has been in an unvisited branch in every phase so far,
form a set: at a
boundary the set spreads to the neighbours of its members, and the agent then removes the
branches it visits, at most half of all of them.  The spreading has to beat the loss, and
it does as soon as every quarter of the branches, together with its neighbours, makes up
three quarters of them; the set never shrinks below a quarter then, whatever the agent
does, a quarter being the fixed point of ``spread, then lose half''. Some walk therefore
survives all $\tau$ phases, and a target realises it, in the row
where the walk starts and in a column whose word spells it.  That target is never visited.
The grid holds $b\,c^{\tau-1}$ targets and there is room for only $\Theta(n)$, which is
what limits $\tau$ to $\Theta(1+\log(n/\Delta))$, a logarithm in base $c$; with phases of
$\Theta(\Delta n)$ steps this gives the logarithm of \cref{thm:main}.
The expansion of every quarter of the branches to three quarters is all we ask of the
$d$-regular graph $H$, and Alon's explicit expanders~\cite{Alon21} give it for $d=82$ and
every $b\ge b_0$.

What is still free is the order of the columns, and it is what keeps the underlying degree
bounded: the order does not enter the labels, so evasion is settled whatever we choose, and
all it decides is which targets are neighbours.  Between two consecutive columns the paths
pair targets carrying the same label, one from each column.  This pairing survives any
boundary at which the two columns are relabelled alike, so it can change only at the
boundaries where their words differ.  A Gray code orders the words so that consecutive ones differ in a single letter;
two consecutive columns are then joined by at most two pairings in the whole lifetime of the graph, and a target never
has more than two neighbours on each side (\cref{sec:skeleton}).

\subsection{Expanders and permutations}\label{sec:expander}

We state the property of $H$ in the form in which we use it, and prove it from an explicit
construction of Alon~\cite{Alon21} and the standard argument that a spectral gap forces
large sets to have large neighbourhoods.

\begin{lemma}\label{lem:expander}
There is a constant $b_0$ such that for every $b\ge b_0$ one can construct in time
polynomial in $b$ a $d$-regular graph $H$ on $[b]$, not necessarily simple, in which every
$Y\subseteq[b]$ with $|Y|\ge b/4$ has, together with its neighbours, at least $3b/4$
elements.  Moreover one can construct in polynomial time permutations
$\pi_1,\ldots,\pi_d$ of $[b]$ such that for every $x\in[b]$ the neighbourhood of $x$ in
$H$ is $\{\pi_1(x),\ldots,\pi_d(x)\}$.
\end{lemma}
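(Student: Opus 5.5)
The plan has three steps: obtain a $d$-regular graph with a good spectral gap from Alon's construction, use the gap to prove the expansion of quarter-sized sets, and decompose the graph into perfect matchings to get the permutations.

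\textbf{Step 1: the expander.} I would take from Alon~\cite{Alon21} the explicit family of $d$-regular graphs, for every sufficiently large number of vertices $b\ge b_0$, whose nontrivial eigenvalues satisfy $\lambda=\max\{|\lambda_2|,|\lambda_b|\}\le 2\sqrt{d-1}+\epsilon$. For $d=82$ this gives $\lambda\le 18+\epsilon$, say $\lambda\le 19$. This is constructible in time polynomial in $b$.

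\textbf{Step 2: expansion from the spectral gap.} Let $Y$ have $|Y|=\alpha b$ with $\alpha\ge1/4$, and let $Z=[b]\setminus(Y\cup N(Y))$, with $|Z|=\beta b$. There is no edge between $Y$ and $Z$, so $e(Y,Z)=0$. The expander mixing lemma gives
\[
0=e(Y,Z)\ge \frac{d\,|Y||Z|}{b}-\lambda\sqrt{|Y||Z|(1-\alpha)(1-\beta)}.
\]
Hence $d^2\alpha\beta\le\lambda^2(1-\alpha)(1-\beta)$, which (more crudely) yields $\beta\le \lambda^2(1-\alpha)/(d^2\alpha)$.

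With $\alpha=1/4$ this is $\beta\le 3\lambda^2/d^2\le 3\cdot 361/6724<0.17<1/4$, so $|Y\cup N(Y)|\ge 3b/4$. The ratio $(1-\alpha)/\alpha$ decreases in $\alpha$, so larger sets only improve the bound, and one can equally shrink $Y$ to size $\lceil b/4\rceil$ first. Multigraph loops and multi-edges are harmless, since the mixing lemma holds for any symmetric adjacency matrix.

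\textbf{Step 3: the permutations.} A $d$-regular multigraph $H$ yields a $d$-regular bipartite double cover $B$, with parts two copies of $[b]$, where each edge $xy$ gives $x\to y$ and $y\to x$ and each loop gives a single edge $x\to x$ if the loop is counted once in the degree. The degree convention needs some care, and I may instead count each loop as a double edge. By Hall's theorem (König), $B$ decomposes into $d$ perfect matchings, and these are found in polynomial time by repeated bipartite matching. Each matching defines a permutation $\pi_s$ with $x\pi_s(x)$ an edge of $H$, and together they cover all $d$ edge-ends at $x$, so $\{\pi_1(x),\dots,\pi_d(x)\}$ is exactly the neighbourhood of $x$.

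\textbf{Main obstacle.} The delicate part is Step 1: checking that Alon's result really supplies degree $82$ (or whatever $d$ is used) for all large $b$, not just special $b$. It also has to be pinned down whether loops and multi-edges are present and how they are counted in the spectrum. If only near-Ramanujan bounds with slack are available, the numerical margin ($0.17$ versus $0.25$) still leaves room.
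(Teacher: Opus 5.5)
Your proposal is correct and follows the paper's proof essentially step for step: Alon's near-Ramanujan graphs with $\varepsilon=1$, giving $\lambda\le 19$; then the expander mixing lemma applied to $Y$ and the set $Z$ of its non-neighbours; then a K\H{o}nig decomposition of the bipartite double cover into $d$ perfect matchings. In the mixing step you use the refined form with the $(1-\alpha)(1-\beta)$ factors while the paper uses the cruder $\lambda\sqrt{|Y||Z|}$ bound together with $19\le d/4$, and either suffices; the obstacle you flag in Step 1 is settled by parity, since Alon's theorem needs $bd$ even and this holds for every $b$ because $d=82$ is even.
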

\begin{proof}
Alon~\cite[Theorem~1.3]{Alon21} shows that for every degree $d$, every $\varepsilon>0$ and
every sufficiently large $b$ with $bd$ even, one can construct in time polynomial in $b$ a
$d$-regular graph on $b$ vertices whose eigenvalues other than the largest, $d$, have
absolute value at most $2\sqrt{d-1}+\varepsilon$.  Our $d=82$ is even, so $bd$ is even for
every $b$.  We take $\varepsilon=1$, so that these eigenvalues are at most
$2\sqrt{81}+1=19\le d/4$ in absolute value, let $b_0$ be the threshold from which the theorem
applies for $d=82$ and $\varepsilon=1$, and let $H$ be the resulting graph on $[b]$.

For the expansion, let $Y\subseteq[b]$ with $|Y|\ge b/4$, and let $Z$ be the set of
vertices neither in $Y$ nor adjacent to a vertex of $Y$.  There are no edges
between $Y$ and $Z$, so the expander mixing lemma~\cite[Lemma~2.5]{HLW06} gives
$d|Y||Z|/b\le(d/4)\sqrt{|Y||Z|}$, that is, $|Y||Z|\le b^2/16$.  Hence $|Z|\le b/4$, and $Y$
together with its neighbours has $b-|Z|\ge3b/4$ elements.

For the permutations, let $H'$ be the bipartite multigraph on $[b]\times\{1,2\}$ with as
many edges between $(x,1)$ and $(y,2)$ as $H$ has between $x$ and $y$, for all
$x,y\in[b]$.  It is $d$-regular (a loop of $H$ counts once in the degree, as
in~\cite{HLW06}), so by K\H{o}nig's theorem it is the disjoint union of $d$ perfect
matchings, which can be found in polynomial time~\cite{LP86}.  In the $i$-th of these
matchings every vertex $(x,1)$ has exactly one partner $(y,2)$, and we set $\pi_i(x)=y$;
since the matching is perfect, $\pi_i$ is a permutation of $[b]$.  Every edge of $H'$ lies
in exactly one of the matchings, so each $y$ appears among $\pi_1(x),\ldots,\pi_d(x)$ as
many times as $H$ has edges between $x$ and $y$, and these are therefore the neighbours
of $x$.
\end{proof}

We fix $H$ and $\pi_1,\ldots,\pi_d$ from \cref{lem:expander}, and we let $\pi_0$ be the
identity permutation, so that a target may also stay in its branch from one phase to the
next.  We thus have $c=d+1$ permutations $\pi_0,\ldots,\pi_d$ of $[b]$, and for every
$Y\subseteq[b]$ the images of $Y$ under them are exactly $Y$ and its neighbours, the set
whose size \cref{lem:expander} bounds from below.

\subsection{Words, columns and labels}\label{sec:labels}

Let $\tau\ge1$ and $q=c^{\tau-1}$.  A \emph{word} is a sequence
$w=(w_1,\ldots,w_{\tau-1})$ with entries in $\{0,\ldots,d\}$; there are $q$ words.  (For
$\tau=1$ there is a single empty word and everything is trivial: there is one column,
one target per branch, and no skeleton edges.)  Let $X$ be a set of $bq$ targets, and name
its elements $x_{w,a}$, one for every word $w$ and every $a\in[b]$.  The $b$ targets with
the same word form the \emph{column} $C_w$, so $X$ is the union of the $q$ columns, and we
call $a$ the \emph{row} of $x_{w,a}$.

The branch of a target in phase $i$ is determined by following the permutations named in
its word. For a word $w$, a row $a\in[b]$ and a phase $i\in[\tau]$ let
\begin{equation}\label{eq:label}
  \lambda_i(w,a)=\pi_{w_{i-1}}\circ\pi_{w_{i-2}}\circ\cdots\circ\pi_{w_1}(a),
\end{equation}
where the composition is empty for $i=1$, so that $\lambda_1(w,a)=a$ for every word, and
$\lambda_{i+1}(w,a)=\pi_{w_i}(\lambda_i(w,a))$ for $i<\tau$.  We call
$\lambda_i(w,a)$ the \emph{label} of $x_{w,a}$ in phase $i$, and we put
$X_{i,j}=\{x_{w,a}:\ \lambda_i(w,a)=j\}$.  For fixed $w$ and $i$ the map
$a\mapsto\lambda_i(w,a)$ is a composition of permutations, hence a permutation of $[b]$, so
every column contains exactly one target of each label in each phase and $|X_{i,j}|=q$ for
all $i$ and $j$.  Thus the sets $X_{i,j}$ form a branch assignment in the sense of
\cref{def:assignment}.

The labels of $x_{w,a}$ form the walk in $H$ that starts at $a$ and is named by $w$; when
$w_i=0$ the label is the same in phases $i$ and $i+1$.  \Cref{fig:skeleton} shows the
columns of a small example, whose other features are explained in \cref{sec:skeleton}.

\begin{lemma}\label{lem:evasive}
The branch assignment $(X_{i,j})$ is evasive.
\end{lemma}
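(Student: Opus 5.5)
We prove the lemma by tracking, phase by phase, the set of branches in which a target can still be if it has lain in an unvisited branch so far; at the end we pick a target realising a surviving walk in $H$. Fix $A_1,\ldots,A_\tau\subseteq[b]$ with $|A_i|\ge b/2$. Define sets $Y_1,\ldots,Y_\tau\subseteq[b]$ recursively by
\[
Y_1=A_1,\qquad Y_{i+1}=\Bigl(\bigcup_{s=0}^{d}\pi_s(Y_i)\Bigr)\cap A_{i+1}\quad(i<\tau).
\]
By the remark after \cref{lem:expander}, $\bigcup_s\pi_s(Y_i)$ is $Y_i$ together with its neighbours in $H$.

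The first step is to show by induction that $|Y_i|\ge b/4$ for every $i$. The base case holds because $|Y_1|=|A_1|\ge b/2$. For the induction step, suppose $|Y_i|\ge b/4$. Then \cref{lem:expander} gives $|\bigcup_s\pi_s(Y_i)|\ge 3b/4$. Since $A_{i+1}$ misses at most $b/2$ elements of $[b]$, the intersection has at least $3b/4-b/2=b/4$ elements. In particular every $Y_i$ is nonempty.

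The second step extracts a surviving walk by backtracking. Choose any $j_\tau\in Y_\tau$. Given $j_{i+1}\in Y_{i+1}$, the definition of $Y_{i+1}$ provides $j_i\in Y_i$ and a letter $s_i\in\{0,\ldots,d\}$ with $\pi_{s_i}(j_i)=j_{i+1}$. This yields branches $j_1,\ldots,j_\tau$ with $j_i\in Y_i\subseteq A_i$ for all $i$ (note $Y_1=A_1$ and $Y_{i+1}\subseteq A_{i+1}$), and a word $w=(s_1,\ldots,s_{\tau-1})$.

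The third step takes the target $x_{w,j_1}$, which exists because the columns exhaust all $q=c^{\tau-1}$ words. By \eqref{eq:label} we have $\lambda_1(w,j_1)=j_1$ and $\lambda_{i+1}(w,j_1)=\pi_{s_i}(\lambda_i(w,j_1))$. By induction $\lambda_i(w,j_1)=j_i$, so $x_{w,j_1}\in X_{i,j_i}$ with $j_i\in A_i$ for every $i$, which is exactly evasiveness. For $\tau=1$ the statement is immediate because $A_1$ is nonempty.

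The only substantive step is the counting bound $|Y_i|\ge b/4$. It relies on the expansion threshold of \cref{lem:expander} exactly matching the loss of half the branches, and on including the identity $\pi_0$ so that $Y_i$ itself is kept.
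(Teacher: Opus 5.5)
Your proof is correct and follows the paper's argument essentially step for step. It uses the same sets $Y_1=A_1$, $Y_{i+1}=N[Y_i]\cap A_{i+1}$, the same inductive bound $|Y_i|\ge b/4$ from the expansion in \cref{lem:expander}, and the same backward extraction of a word $w$ and its target $x_{w,j_1}$.
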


\begin{proof}
Let $A_1,\ldots,A_\tau\subseteq[b]$ with $|A_i|\ge b/2$ for every $i$.  For $Y\subseteq[b]$
write $N[Y]$ for $Y$ together with its neighbours in $H$.  Define $Y_1=A_1$ and
$Y_{i+1}=N[Y_i]\cap A_{i+1}$ for $1\le i<\tau$, so that $Y_i\subseteq A_i$ for every $i$.
We claim that $|Y_i|\ge b/4$ for every $i$.  This holds for $i=1$, as $|A_1|\ge b/2$.  If
it holds for $i$, then $|N[Y_i]|\ge3b/4$ by \cref{lem:expander}, and since
$|A_{i+1}|\ge b/2$ the set $Y_{i+1}=N[Y_i]\cap A_{i+1}$ has at least $3b/4+b/2-b=b/4$
elements.

In particular $Y_\tau$ is not empty, so choose $y_\tau\in Y_\tau$.  For
$i=\tau-1,\tau-2,\ldots,1$ in turn, $y_{i+1}$ lies in $Y_{i+1}\subseteq N[Y_i]$, so there
is $y_i\in Y_i$ with $y_{i+1}\in\{\pi_0(y_i),\ldots,\pi_d(y_i)\}$; choose such a $y_i$ and
let $w_i\in\{0,\ldots,d\}$ be an index with $y_{i+1}=\pi_{w_i}(y_i)$.  By \eqref{eq:label}
the target $x_{w,y_1}$ with $w=(w_1,\ldots,w_{\tau-1})$ has label $\lambda_i(w,y_1)=y_i$ in
phase $i$, so it lies in $X_{i,y_i}$ with $y_i\in A_i$, for every $i$.
\end{proof}

\begin{figure}[t]
\centering
\begin{tikzpicture}[every node/.style={font=\footnotesize},
  tg/.style={circle,draw,fill=black!25,inner sep=1.4pt}]
  \node[font=\scriptsize] at (0.00,0.48) {$00$};
  \node[circle,draw,fill=black,inner sep=1.4pt] (c01) at (0.00,0.00) {};
  \node[circle,draw,fill=black,inner sep=1.4pt] (c02) at (0.00,-0.62) {};
  \node[circle,draw,fill=black,inner sep=1.4pt] (c03) at (0.00,-1.24) {};
  \node[circle,draw,fill=black,inner sep=1.4pt] (c04) at (0.00,-1.86) {};
  \node[font=\scriptsize] at (1.32,0.48) {$01$};
  \node[tg] (c11) at (1.32,0.00) {};
  \node[tg] (c12) at (1.32,-0.62) {};
  \node[tg] (c13) at (1.32,-1.24) {};
  \node[tg] (c14) at (1.32,-1.86) {};
  \node[font=\scriptsize] at (2.64,0.48) {$02$};
  \node[tg] (c21) at (2.64,0.00) {};
  \node[tg] (c22) at (2.64,-0.62) {};
  \node[tg] (c23) at (2.64,-1.24) {};
  \node[tg] (c24) at (2.64,-1.86) {};
  \node[font=\scriptsize] at (3.96,0.48) {$12$};
  \node[tg] (c31) at (3.96,0.00) {};
  \node[tg] (c32) at (3.96,-0.62) {};
  \node[tg] (c33) at (3.96,-1.24) {};
  \node[tg] (c34) at (3.96,-1.86) {};
  \node[font=\scriptsize] at (5.28,0.48) {$11$};
  \node[tg] (c41) at (5.28,0.00) {};
  \node[tg] (c42) at (5.28,-0.62) {};
  \node[tg] (c43) at (5.28,-1.24) {};
  \node[tg] (c44) at (5.28,-1.86) {};
  \node[font=\scriptsize] at (6.60,0.48) {$10$};
  \node[tg] (c51) at (6.60,0.00) {};
  \node[tg] (c52) at (6.60,-0.62) {};
  \node[tg] (c53) at (6.60,-1.24) {};
  \node[tg] (c54) at (6.60,-1.86) {};
  \node[font=\scriptsize] at (7.92,0.48) {$20$};
  \node[tg] (c61) at (7.92,0.00) {};
  \node[tg] (c62) at (7.92,-0.62) {};
  \node[tg] (c63) at (7.92,-1.24) {};
  \node[tg] (c64) at (7.92,-1.86) {};
  \node[font=\scriptsize] at (9.24,0.48) {$21$};
  \node[tg] (c71) at (9.24,0.00) {};
  \node[tg] (c72) at (9.24,-0.62) {};
  \node[tg] (c73) at (9.24,-1.24) {};
  \node[tg] (c74) at (9.24,-1.86) {};
  \node[font=\scriptsize] at (10.56,0.48) {$22$};
  \node[tg] (c81) at (10.56,0.00) {};
  \node[tg] (c82) at (10.56,-0.62) {};
  \node[tg] (c83) at (10.56,-1.24) {};
  \node[tg] (c84) at (10.56,-1.86) {};
  \node[anchor=east,font=\scriptsize] at (-0.22,0.00) {$a=1$};
  \node[anchor=east,font=\scriptsize] at (-0.22,-0.62) {$a=2$};
  \node[anchor=east,font=\scriptsize] at (-0.22,-1.24) {$a=3$};
  \node[anchor=east,font=\scriptsize] at (-0.22,-1.86) {$a=4$};
  \draw[line width=4.2pt,pathearly,opacity=0.42,line cap=round] (c01)--(c11)--(c21)--(c32)--(c42)--(c52)--(c63)--(c73)--(c83);
  \draw[line width=4.2pt,pathlate,opacity=0.5,line cap=round] (c01)--(c12)--(c23)--(c34)--(c41)--(c52)--(c63)--(c74)--(c81);
  \draw (c01)--(c11);
  \draw[swaplate,dash pattern=on 3pt off 1.4pt on 1pt off 1.4pt,semithick] (c01)--(c12);
  \draw (c02)--(c12);
  \draw[swaplate,dash pattern=on 3pt off 1.4pt on 1pt off 1.4pt,semithick] (c02)--(c11);
  \draw (c03)--(c13);
  \draw[swaplate,dash pattern=on 3pt off 1.4pt on 1pt off 1.4pt,semithick] (c03)--(c14);
  \draw (c04)--(c14);
  \draw[swaplate,dash pattern=on 3pt off 1.4pt on 1pt off 1.4pt,semithick] (c04)--(c13);
  \draw (c11)--(c21);
  \draw[swaplate,dash pattern=on 3pt off 1.4pt on 1pt off 1.4pt,semithick] (c11)--(c24);
  \draw (c12)--(c22);
  \draw[swaplate,dash pattern=on 3pt off 1.4pt on 1pt off 1.4pt,semithick] (c12)--(c23);
  \draw (c13)--(c23);
  \draw[swaplate,dash pattern=on 3pt off 1.4pt on 1pt off 1.4pt,semithick] (c13)--(c22);
  \draw (c14)--(c24);
  \draw[swaplate,dash pattern=on 3pt off 1.4pt on 1pt off 1.4pt,semithick] (c14)--(c21);
  \draw (c21)--(c31);
  \draw[swapearly,dashed,semithick] (c21)--(c32);
  \draw (c22)--(c32);
  \draw[swapearly,dashed,semithick] (c22)--(c31);
  \draw (c23)--(c33);
  \draw[swapearly,dashed,semithick] (c23)--(c34);
  \draw (c24)--(c34);
  \draw[swapearly,dashed,semithick] (c24)--(c33);
  \draw (c31)--(c41);
  \draw[swaplate,dash pattern=on 3pt off 1.4pt on 1pt off 1.4pt,semithick] (c31)--(c44);
  \draw (c32)--(c42);
  \draw[swaplate,dash pattern=on 3pt off 1.4pt on 1pt off 1.4pt,semithick] (c32)--(c43);
  \draw (c33)--(c43);
  \draw[swaplate,dash pattern=on 3pt off 1.4pt on 1pt off 1.4pt,semithick] (c33)--(c42);
  \draw (c34)--(c44);
  \draw[swaplate,dash pattern=on 3pt off 1.4pt on 1pt off 1.4pt,semithick] (c34)--(c41);
  \draw (c41)--(c51);
  \draw[swaplate,dash pattern=on 3pt off 1.4pt on 1pt off 1.4pt,semithick] (c41)--(c52);
  \draw (c42)--(c52);
  \draw[swaplate,dash pattern=on 3pt off 1.4pt on 1pt off 1.4pt,semithick] (c42)--(c51);
  \draw (c43)--(c53);
  \draw[swaplate,dash pattern=on 3pt off 1.4pt on 1pt off 1.4pt,semithick] (c43)--(c54);
  \draw (c44)--(c54);
  \draw[swaplate,dash pattern=on 3pt off 1.4pt on 1pt off 1.4pt,semithick] (c44)--(c53);
  \draw (c51)--(c61);
  \draw[swapearly,dashed,semithick] (c51)--(c64);
  \draw (c52)--(c62);
  \draw[swapearly,dashed,semithick] (c52)--(c63);
  \draw (c53)--(c63);
  \draw[swapearly,dashed,semithick] (c53)--(c62);
  \draw (c54)--(c64);
  \draw[swapearly,dashed,semithick] (c54)--(c61);
  \draw (c61)--(c71);
  \draw[swaplate,dash pattern=on 3pt off 1.4pt on 1pt off 1.4pt,semithick] (c61)--(c72);
  \draw (c62)--(c72);
  \draw[swaplate,dash pattern=on 3pt off 1.4pt on 1pt off 1.4pt,semithick] (c62)--(c71);
  \draw (c63)--(c73);
  \draw[swaplate,dash pattern=on 3pt off 1.4pt on 1pt off 1.4pt,semithick] (c63)--(c74);
  \draw (c64)--(c74);
  \draw[swaplate,dash pattern=on 3pt off 1.4pt on 1pt off 1.4pt,semithick] (c64)--(c73);
  \draw (c71)--(c81);
  \draw[swaplate,dash pattern=on 3pt off 1.4pt on 1pt off 1.4pt,semithick] (c71)--(c84);
  \draw (c72)--(c82);
  \draw[swaplate,dash pattern=on 3pt off 1.4pt on 1pt off 1.4pt,semithick] (c72)--(c83);
  \draw (c73)--(c83);
  \draw[swaplate,dash pattern=on 3pt off 1.4pt on 1pt off 1.4pt,semithick] (c73)--(c82);
  \draw (c74)--(c84);
  \draw[swaplate,dash pattern=on 3pt off 1.4pt on 1pt off 1.4pt,semithick] (c74)--(c81);
  \begin{scope}[every node/.style={font=\scriptsize,inner sep=0pt,anchor=base west}]
    \def\legendgap{15pt}
    \node (lg0) at (0,-2.46cm-2pt) {matching used from phase};
    \draw[swapearly,dashed,semithick]
      ([xshift=\legendgap,yshift=2pt]lg0.base east) -- ++(15pt,0) coordinate (le1);
    \node (lg1) at ([xshift=3pt,yshift=-2pt]le1) {$2$};
    \draw[swaplate,dash pattern=on 3pt off 1.4pt on 1pt off 1.4pt,semithick]
      ([xshift=\legendgap,yshift=2pt]lg1.base east) -- ++(16.6pt,0) coordinate (le2);
    \node (lg2) at ([xshift=3pt,yshift=-2pt]le2) {$3$};
    \draw[line width=4.2pt,pathearly,opacity=0.42,line cap=round]
      ([xshift=50pt,yshift=2pt]lg2.base east) -- ++(11.8pt,0) coordinate (le3);
    \node (lg3) at ([xshift=5.1pt,yshift=-2pt]le3) {$P_{2,1}$};
    \draw[line width=4.2pt,pathlate,opacity=0.5,line cap=round]
      ([xshift=\legendgap+2.1pt,yshift=2pt]lg3.base east) -- ++(11.8pt,0) coordinate (le4);
    \node (lg4) at ([xshift=5.1pt,yshift=-2pt]le4) {$P_{3,1}$};
  \end{scope}
\end{tikzpicture}
\caption{The skeleton in full, for $b=4$ branches, $\tau=3$ phases and $d=2$.  Here $H$ is
a four-cycle, with $\pi_1=(1\,2)(3\,4)$ and $\pi_2=(1\,3)(2\,4)$, and the words are the
$c^{\tau-1}=9$ pairs over $\{0,1,2\}$, in Gray code order.  Columns are the sets $C_w$, rows
the indices $a$.  Solid edges are the identity matchings; between any two consecutive
columns there is one further matching, dashed, which replaces the identity from phase~$2$
on if their words differ in the first entry, and from phase~$3$ on if they differ in the
second.  The paths of phase~$1$ are therefore the rows, and every target has at most two
neighbours on each side, so the skeleton has maximum degree four.  The shaded paths are
$P_{2,1}$ and $P_{3,1}$, those of branch~$1$ in phases $2$ and $3$.  Both start at
$\rho_1=x_{00,1}$; $P_{2,1}$ changes row only where the first entry of the words changes,
and $P_{3,1}$ between any two consecutive columns.}
\label{fig:skeleton}
\end{figure}

\subsection{Ordering the targets}\label{sec:skeleton}
We order the words, and with them the columns, by the reflected $c$-ary Gray code; see
Knuth~\cite[Section~7.2.1.1]{Knuth11} or Savage~\cite{Savage97} for background.  For a list $\mathcal L$ of words and a letter $\beta$, write
$\beta\mathcal L$ for the list obtained by prefixing $\beta$ to every word of $\mathcal L$, and
$\mathcal L^R$ for $\mathcal L$ in reverse order.  Let $\Gamma_0$ consist of the empty word alone, and for $\ell\ge0$ let
\[
  \Gamma_{\ell+1}=0\Gamma_\ell,\ 1\Gamma_\ell^R,\ 2\Gamma_\ell,\ 3\Gamma_\ell^R,\ \ldots,
\]
with one block for each letter $\beta\in\{0,\ldots,d\}$, the blocks using $\Gamma_\ell$ and
$\Gamma_\ell^R$ alternately; for $c=3$, $\Gamma_2$ is the order of the columns in
\cref{fig:skeleton}.  By induction on $\ell$, the list $\Gamma_\ell$ contains every word of
length $\ell$ exactly once, starts with the all-zero word, and any two of its consecutive
words differ in exactly one position.  Indeed, in $\Gamma_{\ell+1}$ the first block is
$0\Gamma_\ell$, consecutive words inside a block differ in one position because those of
$\Gamma_\ell$ and of its reverse do, and where two blocks meet the two words differ in the
first letter and have the same suffix, the last or the first word of $\Gamma_\ell$.  We use
$\Gamma_{\tau-1}$, which can be listed in time polynomial in its length $q=c^{\tau-1}$, and
write $w^{(1)},\ldots,w^{(q)}$ for its words in this order.

The path $P_{i,j}$ passes through the columns $C_{w^{(1)}},\ldots,C_{w^{(q)}}$ in this order
and takes from each the unique target whose label in phase $i$ is $j$, so its vertex set is
$X_{i,j}$ and each of its edges joins two consecutive columns.  The target
$\rho_j=x_{w^{(1)},j}$ has label $j$ in every phase, because $w^{(1)}$ is the all-zero word
and $\pi_0$ is the identity, so every path $P_{i,j}$ starts at $\rho_j$.

Consider two consecutive columns $C_w$ and $C_{w'}$.  In phase $i$ the paths cross from one to the other by pairing each target
of $C_w$ with the target of $C_{w'}$ that has the same label; this pairing is a perfect
matching between the two columns, which we call $M_i$.  The skeleton has between the two
columns exactly the edges of $M_1,\ldots,M_\tau$, so a target has at most as many neighbours
in the other column as there are different matchings among them.  Since $w$ and $w'$ differ
in only one position, there are at most two, as \cref{lem:twomatchings} shows.

\begin{lemma}\label{lem:twomatchings}
Let $w$ and $w'$ be consecutive words, differing in position $\mu$.  Then
$M_1=\cdots=M_\mu$ is the identity matching $\{x_{w,a}x_{w',a}:\ a\in[b]\}$, and
$M_{\mu+1}=\cdots=M_\tau$.
\end{lemma}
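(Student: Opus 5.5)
The plan is to compute $M_i$ directly from the label formula \eqref{eq:label}. By definition, in phase $i$ the target $x_{w,a}$ is matched with the unique $x_{w',a'}$ satisfying $\lambda_i(w',a')=\lambda_i(w,a)$, that is,
\[
a'=\lambda_i(w',\cdot)^{-1}\circ\lambda_i(w,\cdot)(a).
\]
So $M_i$ is determined by the permutation $\sigma_i=\Lambda'^{-1}_i\circ\Lambda_i$ of $[b]$, where $\Lambda_i=\lambda_i(w,\cdot)=\pi_{w_{i-1}}\circ\cdots\circ\pi_{w_1}$ and $\Lambda'_i$ is the analogous composition for $w'$. The statement then becomes two claims about $\sigma_i$.

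For $i\le\mu$, the composition defining $\Lambda_i$ uses only the letters $w_1,\ldots,w_{i-1}$, and all of these have index less than $\mu$. Since $w$ and $w'$ differ only in position $\mu$, they agree on these letters, so $\Lambda_i=\Lambda'_i$. Hence $\sigma_i$ is the identity and $M_i$ is the identity matching $\{x_{w,a}x_{w',a}\}$. This covers $i=1$ too, where both compositions are empty.

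For $i>\mu$, I will show that $\sigma_i$ does not depend on $i$. Write $\Lambda_{i+1}=\pi_{w_i}\circ\Lambda_i$ and $\Lambda'_{i+1}=\pi_{w'_i}\circ\Lambda'_i$. When $i>\mu$ we have $w_i=w'_i$, so
\[
\sigma_{i+1}=\Lambda'^{-1}_i\pi_{w_i}^{-1}\pi_{w_i}\Lambda_i=\sigma_i.
\]
By induction from $i=\mu+1$, this gives $M_{\mu+1}=\cdots=M_\tau$. For completeness I will also note that $\sigma_{\mu+1}=\Lambda_\mu^{-1}\pi_{w'_\mu}^{-1}\pi_{w_\mu}\Lambda_\mu$, though the statement does not need this.

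No step is a real obstacle. The only point needing care is the indexing: the phase-$i$ label uses exactly the letters $w_1,\ldots,w_{i-1}$, which is why the switch happens between phases $\mu$ and $\mu+1$ and not one phase earlier.
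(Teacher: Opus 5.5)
Your proof is correct and is essentially the paper's argument. The paper likewise uses that at a boundary $i$ with $w_i=w'_i$ the same permutation $\pi_{w_i}$ acts on both columns and cancels, so $M_{i+1}=M_i$. The only cosmetic difference is for $i\le\mu$: the paper carries the identity matching $M_1$ forward by this same one-step invariance, whereas you observe directly that $\lambda_i(w,\cdot)=\lambda_i(w',\cdot)$.
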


\begin{proof}
In phase~$1$ the label of every target is its row, so $M_1$ is the identity matching.  Now
let $1\le i<\tau$ with $i\ne\mu$.  Then $w_i=w'_i$, so by \eqref{eq:label} the labels of
both columns in phase $i+1$ are obtained from those in phase $i$ by the same permutation
$\pi=\pi_{w_i}$, and
\[
  \lambda_{i+1}(w,a)=\lambda_{i+1}(w',a')
  \iff \pi(\lambda_i(w,a))=\pi(\lambda_i(w',a'))
  \iff \lambda_i(w,a)=\lambda_i(w',a'),
\]
the last step because $\pi$ is a permutation.  Hence $M_{i+1}=M_i$.  Applying this for
$i<\mu$ and for $i>\mu$ gives $M_1=\cdots=M_\mu$ and $M_{\mu+1}=\cdots=M_\tau$.
\end{proof}

By \cref{lem:twomatchings}, the skeleton, the union of all the paths $P_{i,j}$, consists
of the $b$ rows together with at most one further perfect matching between any two
consecutive columns; \cref{fig:skeleton} shows it on a small example.  A target therefore has
at most two neighbours on each side, and the skeleton has maximum degree at most four.

\begin{proof}[Proof of \cref{thm:assignment}]
Take for $b_0$ the constant of \cref{lem:expander}, and let $b\ge b_0$ and $\tau\ge1$.  The
sets $X_{i,j}$ defined by the labels \eqref{eq:label} form a branch assignment of the $bq$
targets (\cref{sec:labels}), and it is evasive by \cref{lem:evasive}.  The paths $P_{i,j}$
form a path realisation with first vertices $\rho_1,\ldots,\rho_b$, and by
\cref{lem:twomatchings} their union, the skeleton, has maximum degree at most four.  Every
step is deterministic and takes time polynomial in $bq$: the graph $H$ and the permutations
$\pi_1,\ldots,\pi_d$ are given by \cref{lem:expander}, the labels by \eqref{eq:label}, and
the Gray code order can be listed in time polynomial in its length~$q$; the paths are then
read off from the labels and the order.
\end{proof}

\section{Proof of the main theorem}\label{sec:proof}

Let $d$ and $c$ be as in \cref{sec:construction}, and let the parameters be given by
\eqref{eq:params} and \eqref{eq:params2}.  Let $b_0$ be the constant of \cref{lem:expander},
and let $\Delta_0$ be the smallest power of two with $\Delta_0/64\ge\max\{8,b_0\}$.  We call
$\Delta$ \emph{admissible} if it is a power of two with $\Delta_0\le\Delta\le n-1$.  An
admissible $\Delta$ gives enough branches both for the construction and for
\cref{lem:expander}; in particular $\Delta\ge512$, the standing assumption of
\cref{sec:construction}.  The condition also forces $n>\Delta_0$, so nothing has to be
assumed separately about $n$.  The constant $\Delta_0$ is absolute but not explicit, since
$b_0$ comes from the expander construction we quote, and we do not optimise the
numerical constants.  \Cref{lem:params} collects the estimates we need, including the edge
count used in \cref{sec:intro}.

\begin{lemma}\label{lem:params}
Let $\Delta$ be admissible.  With these parameters, $b=\Delta/64\ge\max\{8,b_0\}$, $2\le p\le 2n/\Delta+1$,
$n/(3c)<bq\le n/3$, $n/2<k<n$ and $L=bk/8\ge k$.  Moreover
$\tau\ge(1+\log(n/\Delta))/(2\log c)$.  In particular
$p$ and $q$ are $\Theta(n/\Delta)$, $L=\Theta(\Delta n)$ and $\tau=\Theta(1+\log(n/\Delta))$.
Finally, $\und(\TG)$ has between $\Delta n/128$ and $2n+\Delta n/64$
edges.
\end{lemma}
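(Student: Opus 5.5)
The plan is to verify each estimate directly from \eqref{eq:params} and \eqref{eq:params2}, in the order in which they depend on each other.

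\emph{The parameters $b$ and $p$.} The claim $b=\Delta/64\ge\max\{8,b_0\}$ is immediate from admissibility. For $p=\lceil 2n/\Delta\rceil$, the bound $p\le 2n/\Delta+1$ is trivial. Since $\Delta\le n-1$ we have $2n/\Delta>2$, so $p\ge2$ (in fact $p\ge3$).

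\emph{The targets $bq$ and the phases $\tau$.} First I check that $\tau$ is well defined. This needs $b\le n/3$, which holds because $b=\Delta/64<n/64$. By maximality of $\tau$ we have $bq=bc^{\tau-1}\le n/3$ and $bc^{\tau}>n/3$, hence $bq>n/(3c)$.

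\emph{The cycle length $k$.} We have $bp\le b(2n/\Delta+1)=n/32+\Delta/64<n/16$. Together with $bq\le n/3$ this gives $k=n-bp-bq>n-n/16-n/3>n/2$. Clearly $k<n$. Then $L=bk/8\ge k$ because $b\ge8$.

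\emph{Lower bound on $\tau$.} From $bc^{\tau}>n/3$ we get $c^\tau>64n/(3\Delta)$, so $\tau\log c>\log(n/\Delta)+\log(64/3)$. I want $\tau\ge(1+\log(n/\Delta))/(2\log c)$. Set $x=\log(n/\Delta)\ge0$. It suffices to have $x+\log(64/3)\ge(1+x)/2$, and this holds since $\log(64/3)>1/2$ in any base $\le e^{2\log(64/3)}$, in particular for natural or binary logarithms. Alternatively one can use $\tau\ge1$ together with $\tau\log c>x$: either $x\le1$, in which case $\tau\ge1\ge(1+x)/(2\log c)$ because $\log c\ge1$, or $x>1$, in which case $\tau>x/\log c\ge(1+x)/(2\log c)$.

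\emph{The $\Theta$ statements.} These follow at once. For $q$: from $n/(3c)<bq\le n/3$ with $b=\Delta/64$ we get $q=\Theta(n/\Delta)$. Together with $k=\Theta(n)$ this gives $L=\Theta(\Delta n)$. For the upper bound on $\tau$, use $bc^{\tau-1}\le n/3$, which gives $\tau\le1+\log_c(64n/(3\Delta))=O(1+\log(n/\Delta))$.

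\emph{Edge count of $\und(\TG)$.} This is the only step needing care, because the edges have to be counted by type.

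For the upper bound, count the edges by type:
\begin{itemize}
\item cycle edges: $k$;
\item gate paths with the edges $g_{j,p}\rho_j$: $bp$;
\item skeleton edges: at most $2bq$, since the skeleton has maximum degree $4$ on $bq$ vertices;
\item hub edges: each pair $(r,g_{j,u})$ with $r\in S_u$, that is, $b$ edges per cycle vertex, for a total of at most $bk$.
\end{itemize}
The sum is $k+bp+2bq+bk\le 2n+bk$, since $k+bp+bq=n$ and $bq\le n$. This is at most $2n+\Delta n/64$ because $k<n$.

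For the lower bound, I need every one of the $bk$ hub edges to actually occur. This requires every cycle vertex to be the hub at some time $t\le\tau L$, which holds because $L\ge k$, so the hub completes a full revolution within the first phase. The hub edges are pairwise distinct, so there are at least $bk>bn/2=\Delta n/128$ edges.

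The main obstacle is this lower bound, specifically confirming that all $bk$ hub edges are realised.
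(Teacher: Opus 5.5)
Your proposal is correct and follows the paper's proof essentially step by step. It uses the same bounds on $bp$ and $k$, the maximality of $\tau$ to get $n/(3c)<bq\le n/3$, the same type-by-type edge count, and the observation that $L\ge k$ makes every cycle vertex the hub during the first phase, so all $bk$ hub edges appear. The only difference is cosmetic: for the lower bound on $\tau$ you split on $\log(n/\Delta)\le1$, or use $\log(64/3)>1/2$, where the paper notes that a maximum of two numbers is at least their average.
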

\begin{proof}
Since $\Delta$ is admissible, $b=\Delta/64\ge\Delta_0/64\ge\max\{8,b_0\}$.  Since $\Delta\le n$ and $p=\lceil2n/\Delta\rceil$, we
have $2\le p\le 2n/\Delta+1$, hence
\[
  bp\le\frac{\Delta}{64}\Bigl(\frac{2n}{\Delta}+1\Bigr)=\frac n{32}+\frac \Delta{64}\le\frac{3n}{64}.
\]
The set in the definition of $\tau$ is not empty because $b\le n/64\le n/3$, and by the
maximality of $\tau$ we have $bc^{\tau}>n/3$.  Hence
\[
  \frac{n}{3c}<bq\le\frac n3,\qquad
  \frac{n}{3cb}<q\le\frac{n}{3b},\qquad\text{and so}\qquad q=\Theta\Bigl(\frac n\Delta\Bigr).
\]
The number of cycle vertices satisfies
\[
  k=n-bp-bq\ge n-\frac{3n}{64}-\frac n3>\frac n2,
\]
and $k<n$.  In particular $k\ge3$, and $L=bk/8\ge k\ge1$ because $b\ge8$.

For the number of phases, $c^{\tau}=cq>n/(3b)\ge n/\Delta$ because $3b\le \Delta$, so
$\tau>\log_c(n/\Delta)$, and $\tau\ge1$ by definition.  A maximum of two numbers is at least
their average, so
\[
  \tau\ge\max\Bigl\{1,\log_c\frac n\Delta\Bigr\}\ge\frac12\Bigl(1+\log_c\frac n\Delta\Bigr)
  \ge\frac{1}{2\log c}\Bigl(1+\log\frac n\Delta\Bigr),
\]
where the last step uses $\log c>1$.  In the other direction $c^{\tau-1}=q\le n/(3b)$ gives
$\tau\le1+\log_c(64n/(3\Delta))$, so $\tau=\Theta(1+\log(n/\Delta))$.

Finally, we count the edges of the underlying graph.  There are $k$ cycle edges and $bp$
edges on the gate paths and between gates and targets, and the skeleton has at most $2bq$
edges, since it has maximum degree at most four.  Between the cycle and the gates there are exactly
$bk$ edges, those joining each cycle vertex to the $b$ gates of its class; each of them appears, because every cycle vertex is the hub at some step of the
first phase, as $L\ge k$.  Since $k+bp+bq=n$ and $bq\le n/3$, the total is at most
$n+n/3+bk\le 2n+\Delta n/64$, and it is at least $bk\ge(\Delta/64)(n/2)=\Delta n/128$.
\end{proof}

\begin{proof}[Proof of \cref{thm:main}]
Assume first that $\Delta$ is admissible, and let $\TG$ be the temporal graph of
\cref{sec:construction}, built from the evasive branch assignment
and path realisation of \cref{thm:assignment}, which exist because $b\ge b_0$.  It has
exactly $n$ vertices, by the choice of $k$ (\cref{sec:vertices}).  
By \cref{prop:basic}, every snapshot of $\TG$ is a spanning tree in which the hub is the
only vertex of degree larger than three, so $\TG$ is always-connected, and the underlying
graph has maximum degree at most $\Delta/2+2\le\Delta$.

Consider any temporal walk of at most $\tau L$ time steps, from an arbitrary start vertex.
Extending it by waiting, we may assume that it has exactly $\tau L$ steps,
$W=(v_1,\ldots,v_{\tau L+1})$.  For every phase $i$ let $A_i\subseteq[b]$ be the set of
branches that $W$ does not visit during phase $i$.  By \cref{lem:half}, which applies
because $b\ge8$ and $L=bk/8$, the walk visits fewer than $b/2$ branches during phase $i$, so
$|A_i|>b/2$.  Since the assignment is evasive there is a target $x$ such that for every
phase $i$ there is $j_i\in A_i$ with $x\in X_{i,j_i}$.  During phase $i$ the target $x$
belongs to branch $j_i$, which $W$ does not visit, so $x$ is none of the positions
$v_{(i-1)L+1},\ldots,v_{iL+1}$.  These position ranges for $i=1,\ldots,\tau$ cover all
positions $v_1,\ldots,v_{\tau L+1}$.  Hence $W$ does not visit $x$, so no temporal walk of
at most $\tau L$ steps is an exploration, and $\texp(\TG,v^\ast)\ge\tau L+1$ for every
start vertex $v^\ast$.

By \cref{lem:params},
\[
  \frac{\Delta n}{1024}<L<\frac{\Delta n}{512}
  \qquad\text{and}\qquad
  \frac{1+\log(n/\Delta)}{2\log c}\le\tau=O\bigl(1+\log(n/\Delta)\bigr),
\]
so
\[
  \texp(\TG,v^\ast)>\tau L\ge\frac{\Delta n\,(1+\log(n/\Delta))}{2048\log c}
\]
for every start vertex $v^\ast$, and $\tau L=O(\Delta n\,(1+\log(n/\Delta)))$.  The temporal
graph is constant from step $\tau L$ on.  The assignment and its path realisation are
constructed deterministically in time polynomial in $bq\le n$ by \cref{thm:assignment}, and
each of the first $\tau L=O(n^2\log n)$ snapshots, which determine $\TG$, can then be
written down in time polynomial in $n$, so $\TG$ is constructed deterministically in time
polynomial in~$n$.

It remains to remove the assumption that $\Delta$ is a power of two.  Let $\Delta$ be any
integer with $\Delta_0\le\Delta\le n-1$, and let $\Delta'$ be the largest power of two with
$\Delta'\le\Delta$.  Then $\Delta/2<\Delta'\le\Delta$, and $\Delta'$ is admissible, because
$\Delta_0$ is itself a power of two at most $\Delta$ and hence $\Delta_0\le\Delta'$.  We use
the temporal graph of \cref{sec:construction} for $n$ and $\Delta'$.  Its underlying graph has
maximum degree at most $\Delta'\le\Delta$.  It is constant from some time step
$O(\Delta'n\,(1+\log(n/\Delta')))$ on, which is $O(\Delta n\,(1+\log(n/\Delta)))$ because
$\Delta'\le\Delta$ and $n/\Delta'<2n/\Delta$.  Its exploration time is at least
\[
  \frac{\Delta'n\,(1+\log(n/\Delta'))}{2048\log c}\ge\frac{\Delta n\,(1+\log(n/\Delta))}{4096\log c},
\]
because $\Delta'>\Delta/2$ and $n/\Delta'\ge n/\Delta$.  The other properties were
proved for every admissible value, in particular for $\Delta'$, and
\cref{thm:main} follows with $\gamma=1/(4096\log c)$.
\end{proof}

\begin{proof}[Proof of \cref{cor:problem}]
Every snapshot is a subgraph of the underlying graph, so every vertex $v$ has
$d_{\max}(v)\le\Delta$, and therefore $\overline{\Delta}\le\Delta$.  For the reverse inequality,
we use that the construction is run with a power of two $\Delta'>\Delta/2$, so that
$b=\Delta'/64$, and that $k>n/2$ by \cref{lem:params}.  The hub advances by one position at
every time step, so the cycle vertices are the hubs of the first $k$ time steps, which all
belong to the first phase because $L\ge k$, and each of them then has degree $b+1$.  Hence
$d_{\max}(v)\ge b+1$ for the $k$ cycle vertices, and
$\overline{\Delta}\ge(k/n)(b+1)>b/2=\Delta'/128>\Delta/256$; together with
$\overline{\Delta}\le\Delta$ this gives $\overline{\Delta}=\Theta(\Delta)$.

By \cref{thm:main} the exploration time is at least $\gamma\,\Delta n\,(1+\log(n/\Delta))$,
so it is enough to bound this quantity from below.  If $\Delta\le\sqrt n$ then
$\log(n/\Delta)\ge\frac12\log n$ and $\log\Delta\ge1$, so
\[
  \Delta n\,(1+\log(n/\Delta))\ge\frac12\,\Delta n\log n\ge\frac{\Delta}{2\log\Delta}\,n\log n,
\]
which gives both bounds of the corollary.  If $\Delta>\sqrt n$ then $\log n<2\log\Delta$, so
\[
  \Delta n\,(1+\log(n/\Delta))\ge\Delta n>\frac{\Delta}{2\log\Delta}\,n\log n.\qedhere
\]
\end{proof}

\section{Limits and variants of the construction}\label{sec:variants}
We show that our analysis is tight and that the scheme behind it cannot give more, and then
consider three variants, each changing one
ingredient: which degree is bounded, the hub, and the number of moves per step.  Throughout
this section $\Delta$ is admissible.

\paragraph*{Tightness for this construction.}
The bound of \cref{thm:main} is tight for our construction, and not only because $\TG$ is
constant after $\tau L$ steps, which lets the agent finish in $O(n)$ further steps.  Let
$\TG^\ast$ agree with $\TG$ on its first $\tau L$ steps and then repeat its $\tau$ phases for
ever, phase $i'$ using the paths $P_{i,j}$ of phase $i$ whenever $i'\equiv i\pmod\tau$.
Like $\TG$, it cannot be explored within $\tau L$ steps, and it has the same underlying
graph, so the lower bound of \cref{thm:main} holds for it as well.  On $\TG^\ast$ an agent
matches the lower bound by visiting, in each phase, every vertex of a
constant fraction of the branches, those carrying the most targets it has not visited yet.

\begin{proposition}\label{prop:tight}
For every start vertex $v^\ast$ we have
$\texp(\TG^\ast,v^\ast)=O(\Delta n\,(1+\log(n/\Delta)))$, so the bound of \cref{thm:main} is
best possible for $\TG^\ast$ up to a constant factor.
\end{proposition}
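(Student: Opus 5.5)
We need an upper bound $O(\tau L)$ on exploring $\TG^\ast$. The agent proceeds in phases of $\TG^\ast$ and keeps the set $U$ of targets it has not yet visited; the cycle and the gates are handled separately. Since $L\ge k$ and a full branch can be traversed in $O(p+q)=O(n/\Delta)$ steps, a constant number of phases suffices to sweep the cycle and all gates: the hub passes every cycle vertex within $k$ steps, and the agent can ride along the cycle path. The cost is $O(1)$ phases, which is $O(L)$ and so within budget. The main work is the targets.

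First step, one branch per $O(k)$ steps. Within a phase the snapshots differ only by rotation. From the hub the agent enters a branch through $g_{j,u(t)}$ and walks the gates to $\rho_j$ and along $P_{i,j}$ to its end and back, using $O(p+q)=O(n/\Delta)\le O(k)$ steps. It then walks back to a gate and leaves. Leaving is possible only when the hub's class matches the gate it stands on, but the agent may walk along the gate path to the gate $g_{j,u(t)}$ of the current class and exit there; since the hub changes class every step, this needs care. Instead, the agent waits at $g_{j,u}$ for the class $u$ to come around again, which happens every $p$ steps, and then steps onto the hub. By \cref{lem:ring} it then needs about $k/2$ steps to meet the hub again, and it can: by walking toward the hub it reduces its offset by two per step. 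It re-enters at the hub. So each branch costs $O(k)$ steps, and a phase of length $L=bk/8$ allows visiting every vertex of $\epsilon b$ branches for an absolute constant $\epsilon>0$. Phase boundaries falling in the middle of a branch visit are handled by simply forfeiting that branch.

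Second step, a potential argument. In each phase $i$, the sets $X_{i,j}$ partition $U$, so the $\epsilon b$ branches containing the most elements of $U$ hold at least an $\epsilon$ fraction of $U$. Visiting them all gives $|U|\mapsto(1-\epsilon)|U|$ per phase. Starting from $|U|\le bq\le n/3$, after $O(\log n)$ phases $U$ is empty. That bound is too weak: it gives $L\log n$ rather than $L(1+\log(n/\Delta))$. This is the main obstacle.

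The fix is to measure the per-branch load rather than the total. Let $m=\max_j|X_{i,j}\cap U|\le q$. Once $|U|\le b$, say, a constant number of further phases should finish the job, but with only $\epsilon b$ branches per phase that needs $O(1/\epsilon)$ phases of greedy coverage of the remaining targets. Greedy coverage removes at least $\min(|U|,\epsilon b)$ targets per phase when $|U|\le b$ (one per nonempty branch), which is $O(1/\epsilon)=O(1)$ phases. Before that point, $|U|$ shrinks geometrically from $bq$ to $b$ in $O(\log q)=O(\tau)$ phases. The total is therefore $O(\tau+1)$ phases, that is $O(\tau L)=O(\Delta n(1+\log(n/\Delta)))$ by \cref{lem:params}. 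Finally, the agent waits until the current phase is one of those it planned for, which is trivial since every phase admits this greedy step. Together with $\tau L$ as the lower bound, this gives tightness.
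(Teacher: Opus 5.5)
Your proposal is correct and follows essentially the paper's argument. A sweep of a branch plus the return to the hub costs $O(k)$ steps (the paper shows fewer than $k$), so each phase of $L=bk/8$ steps sweeps a constant fraction of the branches, chosen greedily to hold the most unvisited targets; this shrinks the unvisited set geometrically to $O(b)$ within $O(1+\log q)$ phases, and $O(1)$ further phases finish the targets, the gates and the cycle.

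One detail should be made explicit. When the agent is a single edge behind the hub it must wait rather than step, since stepping onto the current hub would put it back at the far end of the cycle path; this case does occur, namely whenever $k$ is even.
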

\begin{proof}
A \emph{sweep} of branch $j$ starts on the hub at some step $t$.  The walk takes the hub
edge to $g_{j,u(t)}$, follows the gate path to $g_{j,p}$ and the target path to its end, and
comes back along the target path and the whole gate path to $g_{j,1}$. 
There it waits until the hub lies in the class $S_1$,
which takes fewer than $p$ steps, and returns to the cycle through the hub edge.  In this
way it visits every gate of branch $j$ and every target of the branch in the current phase,
in at most $3p+2q$ steps, and $3p+2q<n/8<k/4$ by \cref{lem:params} and $\Delta\ge512$.  The
walk now stands on the old hub, at the far end of the cycle path.  It walks along the cycle
path towards the hub, which advances towards it, and meets the hub within $k/2+1$ steps;
when only one edge separates them it waits instead of moving, since a move would put it
behind the hub again.  A sweep and the return to the hub thus take fewer than $k$ steps.
Outside sweeps the walk stays on the hub by moving forward along the cycle, so if it is on
the hub when a phase begins, it can sweep any $\floor{b/8}\ge b/16$ branches of its choice
within that phase of $L=bk/8$ steps.

From an arbitrary start vertex the walk reaches the hub within $O(k)$ steps and stays on it
until the next phase begins, which costs at most one phase. In every phase it then sweeps the $\floor{b/8}$ branches that
carry the most targets it has not visited yet.  These carry at least a $1/16$ fraction of
the unvisited targets, so each phase multiplies the number of unvisited targets by at
most $15/16$.  After $O(1+\log q)$ phases at most $\floor{b/8}$ of the $bq$ targets remain
unvisited, in at most $\floor{b/8}$ branches, and the next phase visits them all.  The walk
then sweeps each of the $b$ branches once, $\floor{b/8}$ per phase, which visits every gate
within $O(L)$ steps, and finally moves forward along the cycle for $k-1$ steps, which keeps
it on the hub and visits every cycle vertex.  Altogether this takes
$O(L\,(1+\log q))=O(\Delta n\,(1+\log(n/\Delta)))$ steps, because $L=\Theta(\Delta n)$ and
$q=\Theta(n/\Delta)$ by \cref{lem:params}.
\end{proof}

\paragraph*{Limits of the scheme.}
However the targets are assigned, their number caps the number of phases.

\begin{proposition}\label{prop:phases}
Let $b$ be even.  Every evasive branch assignment with $b$ branches and $\tau$ phases has
more than $b\,2^{\tau-2}$ targets, even if the sets $X_{i,j}$ have different
sizes; in particular, one with at most $n$ targets has fewer than $\log_2(n/b)+2$ phases.
\end{proposition}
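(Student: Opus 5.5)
The plan is to argue by contradiction through an adversary that chooses the sets $A_1,\ldots,A_\tau$ one phase at a time, each of size exactly $b/2$ (this is possible because $b$ is even), so as to eliminate as many targets as possible. Fix an evasive branch assignment. For $0\le i\le\tau$ let $S_i$ be the set of targets that are still alive after phase $i$:
\[
  S_i=\{x\in X:\ \text{for every } i'\le i,\ x\in X_{i',j} \text{ for some } j\in A_{i'}\}.
\]
Thus $S_0=X$, and the adversary has to choose each $A_i$ once $A_1,\ldots,A_{i-1}$, and hence $S_{i-1}$, are known. Evasiveness says that $S_\tau\ne\emptyset$ for every choice of the $A_i$, and in particular for the adaptive choice below.

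The halving step comes first. In phase $i$ the sets $X_{i,1},\ldots,X_{i,b}$ partition $X$, so they split $S_{i-1}$ into $b$ parts $S_{i-1}\cap X_{i,j}$. Let $A_i$ be the $b/2$ indices $j$ whose parts are smallest. The average part size over these $b/2$ indices is at most the average over all $b$ indices. So the chosen parts together contain at most $|S_{i-1}|/2$ targets, that is, $|S_i|\le|S_{i-1}|/2$. This uses nothing about the sizes $|X_{i,j}|$, which is why the statement allows them to differ. Applying this in phases $1,\ldots,\tau-1$ gives
\[
  |S_{\tau-1}|\le|X|/2^{\tau-1}.
\]

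The key step, where the factor $b$ is gained, is to treat the last phase more cleverly than by halving. Suppose $|S_{\tau-1}|\le b/2$. Then the surviving targets lie in at most $b/2$ branches in phase $\tau$. The adversary takes $A_\tau$ to be any $b/2$ branches avoiding all of them, which is possible because at least $b/2$ branches contain no survivor. Then $S_\tau=\emptyset$, contradicting evasiveness. Hence $|S_{\tau-1}|>b/2$, and combined with the halving bound this gives $|X|>2^{\tau-1}\cdot b/2=b\,2^{\tau-2}$. For $\tau=1$ the same last-phase argument applies directly to $S_0=X$, giving $|X|>b/2$, which is the claimed bound $b\,2^{-1}$.

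The consequence follows at once. If $|X|\le n$, then $b\,2^{\tau-2}<n$, so $\tau<\log_2(n/b)+2$. I do not expect a real obstacle. The only point needing care is to notice that halving in every phase yields only $|X|\ge2^\tau$, and that the factor $b$ comes from the final phase, where fewer than $b/2+1$ survivors can be avoided entirely.
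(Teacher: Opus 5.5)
Your proposal is correct and is essentially the paper's proof. The paper's idealised agent greedily visits the $b/2$ branches holding the most unvisited targets, which halves the survivors in each of the first $\tau-1$ phases, and then catches the at most $b/2$ remaining ones, since they lie in at most $b/2$ branches. Your adversary choosing $A_i$ as the $b/2$ branches with the fewest survivors is the complementary description of the same argument.
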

\begin{proof}
Suppose that there are at most $b\,2^{\tau-2}$ targets.  Consider an idealised agent that
can visit any $b/2$ branches in each phase, and let it choose them greedily as in the proof
of \cref{prop:tight}: those that contain the most targets it has not visited yet.  In each
of the first $\tau-1$ phases it then visits at least half of the unvisited targets, so at
most $b\,2^{\tau-2}/2^{\tau-1}=b/2$ remain for phase $\tau$, where they lie in at most $b/2$
branches and are all visited.  Let $A_i$ be the set of the $b/2$ branches not visited in
phase $i$.  Every target is visited in some phase $i$, and in that phase it lies in a
branch outside $A_i$; hence no target lies in a branch of $A_i$ in every phase, and the assignment is not
evasive.  The second claim follows, since an evasive assignment with at most $n$ targets
has $b\,2^{\tau-2}<n$.
\end{proof}

Our assignment has $bq\le n/3$ targets and $\tau=1+\lfloor\log_c(n/(3b))\rfloor$ phases by
\eqref{eq:params}, so it is within a constant factor of this ceiling for every $\Delta$; the
only loss is that it counts in base $c$ rather than in base two.  The length of a phase is
capped too.  From any vertex the agent reaches the hub within $O(k)$ steps, and a walk
through a whole branch and back to the hub then takes fewer than $k$ steps (\cref{prop:tight}), so
within $(b/2+1)k+O(k)$ steps of a phase it can visit more than half of the branches, and
evasion no longer applies.  Phases therefore have $O(bk)=O(bn)$ steps, and there are fewer
than $\log_2(n/b)+2$ of them.  Since the hub has degree $b+1\le\Delta$ and
$b\,(1+\log(n/b))=O(\Delta\,(1+\log(n/\Delta)))$ for $b<\Delta$, this scheme cannot give
more than $O(\Delta n\,(1+\log(n/\Delta)))$ steps, whatever the assignment, and \cref{thm:main}
attains this.

\paragraph*{Bounded snapshot degree only.}
If only the snapshot degree has to be bounded, the two devices that serve the underlying
degree can be dropped: the gates, which keep the number of cycle neighbours of every vertex
outside the cycle bounded, and the Gray code order of the targets, which gives the skeleton
bounded degree.  Attach the targets of branch $j$ in phase $i$ directly to the hub, as a path
through $X_{i,j}$ in an arbitrary order, and put the $bp$ vertices that were gates on the
cycle, so that there are still $n$ vertices.  Every $G_t$ is then a \emph{spider}, a tree
with a single vertex of degree larger than two, namely the hub, of degree $b+1$; hence
$\Delta_s=b+1=\Theta(\Delta)$, while the underlying degree is no longer bounded.  The
branches are still joined to each other only through the hub, so \cref{lem:half} applies
verbatim once ``gate'' is read as ``first vertex of the branch'' in the definitions of
entering and leaving a branch, and \cref{lem:evasive} is untouched.  The bound of
\cref{thm:main} therefore holds with $\Delta_s$ in place of $\Delta$, for temporal graphs
whose snapshots are spiders.

Since the skeleton no longer needs bounded degree, the assignment need not come from an
expander either: a random one is evasive with high probability.  Choose $q$ independently of
$\tau$, subject only to $bq\le n/3$, and assign every target to a branch independently and
uniformly at random in every phase.  The branches then hold $q$ targets only on average and
may even be empty, which spiders tolerate, and evasion is defined as in
\cref{def:evasive}.  It suffices to consider sets $A_1,\ldots,A_\tau\subseteq[b]$ with
$|A_i|=\lceil b/2\rceil$, since enlarging an $A_i$ can only help.  For fixed such sets, a
target lies in a branch of $A_i$ in every phase $i$ with probability at least $2^{-\tau}$,
independently of the other targets, so the probability that no target does is at most
$(1-2^{-\tau})^{bq}\le e^{-bq\,2^{-\tau}}$.  There are fewer than $2^{b\tau}$ choices of the
$A_i$, so the assignment is evasive with probability at least
$1-2^{b\tau}e^{-bq\,2^{-\tau}}$, which is at least $1-e^{-b\tau}$ as soon as
$q\ge2\tau\,2^{\tau}$.  With $bq\le n/3$ this allows, for large
$n/\Delta$, $\tau=\log_2(n/\Delta)-O(\log\log(n/\Delta))$ phases, within
$O(\log\log(n/\Delta))$ of the ceiling of \cref{prop:phases}, against the $\log_c(n/\Delta)$
phases of \cref{sec:assignment}.  
What the expander buys is therefore not evasion but an evasive assignment with the column
structure that the Gray code order needs, and a deterministic construction.

\paragraph*{The role of the hub.}
The factor $\Delta$ in \cref{thm:main} comes from the hub, the only vertex of large degree.
It is joined to $b=\Theta(\Delta)$ branches and advances by one position per step, so the
agent can enter a branch at most once every $\Theta(n)$ steps (\cref{lem:entries}), and in a
phase of $\Theta(\Delta n)$ steps it visits fewer than half of the branches
(\cref{lem:half}).  Without a hub, the combination of phases and an evasive assignment still works with two
branches, and gives the bound known for snapshots that are paths.  Let $n$ be a power of two, fix a set of $n/2$
vertices, the middle, and give the other $n/2$ vertices the $n/2$ binary codes of length
$\log_2(n/2)$, one each.  There are $\log_2(n/2)$ phases of $n/2$ steps, and every snapshot
of phase $i$ is the same path: it runs through the $n/4$ vertices whose code has $i$-th bit
$0$, then through the middle, and then through the other $n/4$ vertices.  The middle cannot
be crossed within a phase, so in each phase the agent visits vertices of at most one end,
and the vertex whose $i$-th bit points, for every $i$, to an end not visited in phase $i$ is
never visited; exploration therefore takes more than $(n/2)\log_2(n/2)$ steps. 
With paths that are fixed during each phase, no evasive assignment gives more than
$O(n\log n)$: the agent can then visit all $n$ vertices within $2n$ steps, so phases must
be shorter than $2n$, and \cref{prop:phases} allows only $\log_2n+O(1)$ of them.  The best known lower bound for
snapshots that are paths is $\Omega(n\log n)$~\cite{EHK21,AGMZ22}, and we do not know
whether it can be improved.

\paragraph*{Several moves per time step.}
The rotating cycle relies on the agent making one move per step.  With two moves per step,
as allowed by Erlebach et al.~\cite{EKLSS19}, an agent on the gate $g_{j,u(t)}$ at the start
of step $t$ can pass through the hub into any other branch within that step, so it need
never end a step behind the hub, and the delay of \cref{lem:ring} disappears.  From
$g_{j,1}$ the agent can walk to the end of the target path of branch $j$ and back in $p+q-1$
steps, wait fewer than $p$ steps until the hub lies in $S_1$, and pass to $g_{j+1,1}$; the
$b$ branches thus take fewer than $b\,(q+2p)<n/2$ steps by \cref{lem:params}.  From any
start vertex the agent reaches some $g_{j,1}$ within $O(k)$ steps and, waiting there for the
next phase if fewer than $n/2$ steps of the current one remain, visits every gate and every
target within a single phase, since $L>n/2$.  The cycle vertices take $O(k)$ further steps,
so the temporal graph of \cref{thm:main} is explored in $O(n)$ steps.  
We do not know whether any lower bound with a linear dependence on the degree survives two
moves per step.  The $\Omega(n\log n)$ lower bound of~\cite{EHK21} for planar underlying
graphs of maximum degree four, which has no such dependence, does survive: Erlebach
et al.~\cite[Corollary~5.2]{EKLSS19} show that it holds for any constant number of moves per
step.

\section{Open questions}\label{sec:conclusion}

\Cref{thm:main} answers Problem~3.1 of~\cite{BGMR25} positively, but the quantitative picture
is still incomplete.  For always-connected temporal graphs on $n$ vertices with underlying
maximum degree at most $\Delta$, the largest exploration time lies between our
${\Omega(\Delta n\,(1+\log(n/\Delta)))}$ and the $O(\min\{n^2,n^{3/2}\sqrt{\Delta\log n}\})$
that follows from~\cite{MS16,BGMR25}.  The two differ by a polynomial factor whenever
$\Delta$ is polynomially smaller than $n$, and closing this gap is the main question, open
even for constant degree: there the known bounds are $\Omega(n\log n)$~\cite{EHK21,AGMZ22}
and $O(n^{3/2}\sqrt{\log n})$~\cite{BGMR25}, and our bound does not improve the lower one.
Nor can our technique, as it stands, close the gap from below: the analysis is tight for the
graphs we build (\cref{prop:tight}), and no choice of expander or of assignment makes the
scheme give more than the bound of \cref{thm:main} (see the discussion after
\cref{prop:phases}).

Finally, two further questions concern the variants of \cref{sec:variants}: whether snapshots
that are paths allow a lower bound beyond $\Omega(n\log n)$, and whether a lower bound with a
linear dependence on the degree survives two moves per step.

\section*{Use of AI tools}

Large language models (ChatGPT and Claude) were used as research assistants throughout
this work, interactively and under our direction: in developing and checking the
arguments, in writing and revising the text, in producing the figures, and in writing code
to test the construction.  In particular,
the models suggested using Alon's explicit expanders, along with a Gray code order of the
targets, to keep the underlying degree bounded.  Every statement and every proof has been
verified by us, and we take full responsibility for the paper's content.

\bibliographystyle{plain}
\bibliography{references_arxiv}

\begin{table}[p]
\centering
\caption{Notation.}\label{tab:notation}
\smallskip
\setlength{\tabcolsep}{4pt}
\begin{tabular}{@{}lll@{}}
\toprule
symbol & meaning & defined in \\
\midrule
$n=|V|$ & number of vertices & \cref{sec:model} \\
$\TG=(G_t)_{t\ge1}$ & temporal graph, $G_t$ its snapshot at time $t$ & \cref{sec:model} \\
$\und(\TG)$ & underlying graph, the union of all snapshots & \cref{sec:model} \\
$W=(v_1,v_2,\ldots)$ & temporal walk, $v_t$ the position at the start of step $t$ & \cref{sec:model} \\
$\texp(\TG,v^\ast)$ & exploration time from $v^\ast$ & \cref{sec:model} \\
$d_{\max}(v)$ & largest degree of $v$ over all time steps & \cref{sec:model} \\
$\Delta$, $\Delta_s$, $\overline{\Delta}$ & underlying, snapshot, average temporal maximum degree & \cref{sec:model} \\
\addlinespace
$d=82$ & expander degree & \cref{sec:construction} \\
$c=d+1$ & number of permutations & \cref{sec:construction} \\
$b=\Delta/64$ & number of branches & \eqref{eq:params} \\
$\tau$ & number of phases & \eqref{eq:params} \\
$q=c^{\tau-1}$ & number of targets per branch & \eqref{eq:params} \\
$p=\lceil 2n/\Delta\rceil$ & number of classes, and of gates per branch & \eqref{eq:params2} \\
$k=n-bp-bq$ & number of cycle vertices & \eqref{eq:params2} \\
$L=bk/8$ & number of time steps per phase & \eqref{eq:params2} \\
$\gamma$, $\Delta_0$ & absolute constants of \cref{thm:main} & \cref{sec:proof} \\
admissible $\Delta$ & power of two with $\Delta_0\le\Delta\le n-1$ & \cref{sec:proof} \\
$\Delta'$ & largest power of two at most $\Delta$ & \cref{sec:proof} \\
\addlinespace
$r_1,\ldots,r_k$ & cycle vertices, with indices read modulo $k$ & \cref{sec:vertices} \\
$S_u$ & class of index $u$ & \cref{sec:vertices} \\
$g_{j,1},\ldots,g_{j,p}$ & gates of branch $j$; $g_{j,u}$ serves $S_u$ & \cref{sec:vertices} \\
$X$ & set of the $bq$ targets & \cref{sec:vertices} \\
\addlinespace
$X_{i,j}$ & targets of branch $j$ in phase $i$ & \cref{def:assignment} \\
$A_i$ & branches not visited in phase $i$, at least $b/2$ of them & \cref{def:evasive} \\
$P_{i,j}$, $\rho_j$ & target path of branch $j$ in phase $i$, and its first vertex & \cref{sec:assignment-abstract} \\
skeleton & union of all the paths $P_{i,j}$ & \cref{sec:assignment-abstract} \\
\addlinespace
$\hub_t=r_t$, $u(t)$ & hub at time $t$, and the index of its class & \cref{sec:steps} \\
$Q_t$ & cycle path at time $t$, read from the hub outwards & \cref{sec:steps} \\
branch $j$ in phase $i$ & vertex set $\{g_{j,1},\ldots,g_{j,p}\}\cup X_{i,j}$ & \cref{sec:steps} \\
visits branch $j$ & in phase $i$: some $v_t$ with $(i-1)L<t\le iL+1$ lies in it & \cref{sec:ring} \\
\addlinespace
$H$ & $d$-regular expander on the set $[b]$ of branches & \cref{lem:expander} \\
$\pi_0,\ldots,\pi_d$ & $\pi_0$ the identity, $\pi_1(x),\ldots,\pi_d(x)$ the neighbours of $x$ & \cref{sec:expander} \\
$b_0$ & threshold on $b$ in the expander construction & \cref{lem:expander} \\
$w$, $C_w$, $x_{w,a}$ & word, its column, and the column's target in row $a$ & \cref{sec:labels} \\
$\lambda_i(w,a)$ & label of $x_{w,a}$ in phase $i$, that is, its branch & \eqref{eq:label} \\
$N[Y]$ & $Y$ together with its neighbours in $H$ & \cref{lem:evasive} \\
$w^{(1)},\ldots,w^{(q)}$ & words in Gray code order & \cref{sec:skeleton} \\
$M_i$ & matching between two consecutive columns in phase $i$ & \cref{sec:skeleton} \\
\addlinespace
$\TG^\ast$ & $\TG$ with its $\tau$ phases repeated for ever & \cref{sec:variants} \\
sweep & walk through a whole branch and back to the cycle & \cref{prop:tight} \\
\bottomrule
\end{tabular}
\end{table}
\end{document}